\newtheorem{theorem}{Theorem}
\newtheorem{assumption}{Assumption}
\newtheorem{lemma}{Lemma}
\newtheorem{proposition}{Proposition}
\newtheorem{defn}{Definition}
\DeclareMathSymbol{\shortminus}{\mathbin}{AMSa}{"39}
\DeclareMathOperator*{\argmax}{argmax}
\DeclareMathOperator*{\argmin}{argmin}
\newcommand{\real}{\mathbb{R}}
\newcommand{\integer}{\mathbb{Z}}
\algnewcommand{\LineComment}[1]{\State \(\triangleright\) #1}
\algnewcommand\algorithmicswitch{\textbf{switch}}
\algnewcommand\algorithmiccase{\textbf{case}}
\algnewcommand\algorithmicassert{\texttt{assert}}
\algnewcommand\Assert[1]{\State \algorithmicassert(#1)}%
\title{Mixed-Integer MPC-Based Motion Planning Using Hybrid Zonotopes with Tight Relaxations}
\author{Joshua A. Robbins, Jacob A. Siefert, Sean Brennan, and Herschel C. Pangborn
\thanks{Joshua A. Robbins, Sean Brennan, and Herschel C. Pangborn are with the Department of Mechanical Engineering, The Pennsylvania State University, University Park, PA 16802 USA (e-mail: {\tt\small jrobbins@psu.edu, sbrennan@psu.edu, hcpangborn@psu.edu}).}
\thanks{Jacob A. Siefert is with the Penn State Applied Research Laboratory, University Park, PA 16802 USA (e-mail: {\tt\small jas7031@psu.edu}).}
\thanks{This research was supported by Peraton.}
}
\date{September 2024}
\begin{document}

\maketitle

\begin{abstract} 
Autonomous vehicle (AV) motion planning problems often involve non-convex constraints, which present a major barrier to applying model predictive control (MPC) in real time on embedded hardware. This paper presents an approach for efficiently solving mixed-integer MPC motion planning problems using a hybrid zonotope representation of the obstacle-free space. The MPC optimization problem is formulated as a multi-stage mixed-integer quadratic program (MIQP) using a hybrid zonotope representation of the non-convex constraints. Risk-aware planning is supported by assigning costs to different regions of the obstacle-free space within the MPC cost function. A multi-stage MIQP solver is presented that exploits the structure of the hybrid zonotope constraints. For some hybrid zonotope representations, it is shown that the convex relaxation is \emph{tight}, i.e., equal to the convex hull. In conjunction with logical constraints derived from the AV motion planning context, this property is leveraged to generate tight quadratic program (QP) sub-problems within a branch-and-bound mixed-integer solver. The hybrid zonotope structure is further leveraged to reduce the number of matrix factorizations that need to be computed within the QP sub-problems. Simulation studies are presented for obstacle-avoidance and risk-aware motion planning problems using polytopic maps and occupancy grids. In most cases, the proposed solver finds the optimal solution an order of magnitude faster than a state-of-the-art commercial solver. Processor-in-the-loop studies demonstrate the utility of the solver for real-time implementations on embedded hardware.
\end{abstract}

\begin{IEEEkeywords}
Hybrid zonotope, mixed-integer optimization, model predictive control (MPC), motion planning
\end{IEEEkeywords}

\section{Introduction}
Motion planning is a foundational task in vehicle autonomy. Often, motion planning algorithms serve as an intermediate autonomy level between high-level decision making and/or path-planning algorithms and low-level path-following controllers \cite{matni2024quantitative}. Given some local description of the environment, which is in general non-convex \cite{ioan2021mixed}, these algorithms must construct collision-free trajectories and respect constraints on the vehicle's motion \cite{gautam2024overview}. Additional considerations may factor into the motion plan such as smoothness or comfort \cite{gautam2024overview}, and uncertainty mitigation or risk reduction \cite{benciolini2024combining, safaoui2021risk}. Motion planning is generally performed online, which makes computational efficiency a necessity.

\subsection{Gaps in the Literature}
MPC is a widely used technique in AV motion planning. Trajectories generated by MPC-based motion planners have numerous features including dynamic feasibility, constraint satisfaction, and optimality in a receding horizon sense. A significant barrier to the widespread adoption of MPC-based AV motion planners is the computational challenge of solving MPC problems of sufficient expressiveness in real-time on embedded hardware \cite{gautam2024overview}.

Many real-time implementable approaches for MPC-based motion planning rely on approximate or local solutions. Examples include sequential convex programming \cite{augugliaro2012generation, alrifaee2017sequential, morgan2014model}, model predictive path integral control (MPPI) \cite{williams2016aggressive, williams2017model}, and model predictive contouring control \cite{brito2019model}. A limitation of these approaches is that they are not guaranteed to find the globally optimal solution when there are non-convex constraints, as in the case of obstacle avoidance \cite{ioan2021mixed}. 

Several MPC formulations have been proposed to account for non-convexity in motion planning problems. For example, branching and scenario MPC account for non-convexity via enumerated scenarios and scenario trees \cite{oliveira2023interaction, sopasakis2019risk, chen2022interactive}. In \cite{nair2022collision}, collision avoidance is enforced in MPC via a dual formulation. Ref.~\cite{gratzer2024two} presents a mixed-integer MPC formulation for AV motion planning which is subsequently validated using high fidelity traffic simulations. A review of mixed-integer programming formulations for motion planning is given in \cite{ioan2021mixed}.  

MPC problems formulated as mixed-integer convex programs (e.g., \cite{gratzer2024two, quirynen2023real}) can be solved to global optimality using branch-and-bound methods \cite{floudas1995nonlinear}. Mixed-integer programs are often solved using general-purpose mixed-integer solvers, such as Gurobi \cite{gurobi} and MOSEK \cite{mosek}. Specialized techniques for solving  mixed-integer programs have also been proposed. In \cite{quirynen2023real}, a custom MIQP solver for motion planning is proposed and evaluated in robotics experiments. When compared to commercial mixed-integer solvers, the solver proposed in that study was found to be approximately 5-6 times faster than MOSEK and 1.5-2.5 times slower than Gurobi. Follow-on work by the same authors used a neural network to predict the optimal integer variables for the MIQPs, thus reducing the MIQPs to convex QPs at the expense of sub-optimality \cite{reiter2024equivariant}. Similar neural network-based approaches were proposed in \cite{cauligi2020learning}.

Advanced set representations have been leveraged for use in motion planning problem formulations. Zonotopes are used to reduce the complexity of a hyperplane arrangement description of the obstacle avoidance constraints in \cite{ioan2019complexity}. The obstacle-free space is described exactly using polynomial zonotopes in a nonlinear MPC formulation in \cite{nascimento2023nmpc}, and using hybrid zonotopes in a mixed-integer formulation in \cite{bird2021unions}. In both \cite{nascimento2023nmpc} and \cite{bird2021unions}, the advanced set representations resulted in reductions in optimization times when compared to a hyperplane arrangement description of the obstacle-free space. Both of these papers used general-purpose solvers rather than specialized motion planning solvers as in \cite{quirynen2023real, reiter2024equivariant}.

Many of the aforementioned problem formulations (e.g., \cite{quirynen2023real, nair2022collision,  gratzer2024two}) are based on a description of the obstacles and do not explicitly use a free space description. Formulating motion planning problems in terms of the obstacle-free space enables many useful problem specifications \cite{marcucci2024shortest, ioan2021mixed}, such as the ability to formulate the motion planning problem using an occupancy grid map (OGM). OGMs are widely used in robotics to fuse diverse sources of uncertainty within a unified, probabilistic description of the environment \cite{elfes1989using, macenski2023desks, hoermann2018dynamic}. Occupancy probabilities can be generalized to risks or costs as described in \cite{macenski2023desks}.

Despite their ubiquity, most existing approaches for motion planning over OGMs cast the problem in terms of a binary formulation where occupancy probabilities are not directly utilized. For instance, chance constraints are used to construct a binary OGM in a stochastic MPC formulation in \cite{brudigam2021stochastic}. Clothoid tentacles are used for motion planning over binary OGMs in \cite{mouhagir2016integrating}. In \cite{cho2023model}, a barrier function is used with nonlinear MPC to prohibit an AV from entering cells above a certain occupancy probability. In \cite{zhang2023real}, time-varying OGMs are used to calculate the feasible space for a sampling-based motion planner. For situations with significant uncertainty or for which there is no feasible way for the AV to avoid entering a cell with elevated occupancy probability, these approaches can be limiting. Ref.~\cite{artunedo2020motion} directly incorporates occupancy probabilities into a sampling-based motion planner that uses trajectory parameterizations. To the best of the authors' knowledge, there are no prior publications showing MPC-based motion planning where cell occupancy probabilities or costs are directly used in the MPC cost function.

\subsection{Contributions}
This article presents an approach to formulating and solving MIQPs for MPC-based motion planning. A hybrid zonotope set representation is used to represent the obstacle-free space, and its structure is exploited within the MIQP solver. The obstacle-free space is efficiently represented either as a general polytopic map or an OGM. Occupancy probabilities or costs associated with obstacle-free regions are incorporated into the MPC formulation.

This article builds upon our previous work \cite{robbins2024efficient}, which presented a branch-and-bound algorithm based on a notion of reachability between regions of the obstacle-free space and showed how the hybrid zonotope structure can be exploited within the QP sub-problems. The new contributions in this paper can be categorized as follows; 
(1)~\textit{Convex relaxations}:
We show that certain hybrid zonotopes constructed from general polytopes and from OGMs have the property that their convex relaxation is their convex hull. We use this property in combination with ``reachability'' constraints to tighten the QP sub-problems such that convergence can be reached with fewer branch-and-bound iterations. 
(2)~\textit{Algorithm development}: The branch-and-bound solver is modified to support multi-threading and warm-starting, and nodes are generated in such a way as to reduce the sensitivity of solution times to map complexity. The MPC formulation and MIQP solver are implemented in C++.
(3)~\textit{Region-dependent costs}: In order to perform risk-aware planning, region-dependent costs are added to the MPC formulation. The obstacle avoidance-based branch-and-bound logic is modified accordingly. 
(4)~\textit{Numerical results}: The proposed approach is evaluated in desktop computer simulations and real-time processor-in-the-loop testing.
When compared to a conventional mixed-integer constraint representation (unions of halfspace representation polytopes using the Big-M method \cite{ioan2021mixed}) and the state-of-the-art mixed-integer solver Gurobi \cite{gurobi}, the proposed approach often finds the optimal solution one to two orders of magnitude faster.
\section{Preliminaries}  \label{sec:preliminaries}

\subsection{Notation} Unless otherwise stated, scalars are denoted by lowercase letters, vectors by boldface lowercase letters, matrices by uppercase letters, and sets by calligraphic letters. Vectors consisting entirely of zeroes and ones are denoted by $\mathbf{0} = \begin{bmatrix} 0 & \cdots & 0 \end{bmatrix}^T$ and $\mathbf{1} = \begin{bmatrix} 1 & \cdots & 1 \end{bmatrix}^T$, respectively. The identity matrix is denoted as $I$. Empty brackets $[]$ indicate that a matrix has row or column dimension of zero. Diagonal and block diagonal matrices are denoted as $\text{diag}([\cdot])$ and $\text{blkdiag}([\cdot])$, respectively. Vertex representation (V-rep) polytopes are denoted by their vertices as $\mathcal{P} = \{ \mathbf{v}_1, ..., \mathbf{v}_n\}$. Square brackets following a vector denote an indexing operation such that $\mathbf{x}[i] = x_i$. A variable $\mathbf{x}$ is box constrained if $\mathbf{x} \in [\mathbf{x}^l, \mathbf{x}^u]$. A matrix $G$ is said to be in ``box constraint form'' if there exist permutation matrices $P_r$ and $P_c$ such that $G = P_r \Bar{G} P_c$ where $\Bar{G} = \begin{bmatrix}
    -I^T & I^T
\end{bmatrix}^T$.
The function $\mathbf{f}_r(\mathbf{v})$ returns the non-zero one-based indices of $\mathbf{v}$. For instance, if $\mathbf{v} = \begin{bmatrix} 0 & 0.3 & 0.7 \end{bmatrix}$, then $\mathbf{f}_r(\mathbf{v}) = \begin{bmatrix} 2 & 3 \end{bmatrix}$.

\subsection{Motion Planning Problem Statement}
We consider the problem of planning collision-free trajectories for an AV over a receding horizon. A discrete-time formulation is used with time step $\Delta t$ and horizon length $N$. The time step index is denoted by $k$, with $k=0$ corresponding to the current time. The AV dynamics are modeled as the linear time-invariant (LTI) system $\mathbf{x}_{k+1} = A \mathbf{x}_k + B \mathbf{u}_k$ subject to convex state and input constraints $\mathbf{x}_k \in \mathcal{X} \subset \real^{n_x}$ and $\mathbf{u}_k \in \mathcal{U} \subset \real^{n_u}$, respectively. A convex terminal state constraint set is also imposed, such that $\mathbf{x}_N \in \mathcal{X}_N \subset \real^{n_x}$. The position states of the system, $\mathbf{y}_k$, are additionally subject to non-convex obstacle avoidance constraints. These are described in terms of the obstacle-free space $\mathcal{F}$ such that $\mathbf{y}_k \in \mathcal{F}$, where $\mathcal{F}$ is not a function of $k$. Regions of the obstacle-free space may be assigned a cost to disincentivize the planning of trajectories that pass through those regions. These costs may, for example, be informed by uncertainty distributions for the AV or obstacle positions.

This problem statement is exemplified in Fig.~\ref{fig:prob-statement}. Here, an AV is tasked with achieving a reference state $\mathbf{x}^r_k$ (the pink star) while navigating through an environment with four obstacles (the boulders). The obstacle-free space $\mathcal{F}$ is partitioned into a grid with region-dependent costs. Darker red cells have higher costs while more transparent cells have lower costs. 
The reference is assumed to be provided by a high-level decision-making algorithm as is often the case in layered control architectures \cite{matni2024quantitative}. In this example and in the numerical results (Sec.~\ref{sec:results}), the reference is constant over the planning horizon such that $\mathbf{x}_k^r = \mathbf{x}^r \; \forall k \in \{0, ..., N\}$. The planned trajectory for this example is displayed using blue dots.

\begin{figure}[tb]
    \centering
    \includegraphics[width=0.45\linewidth,angle =-90]{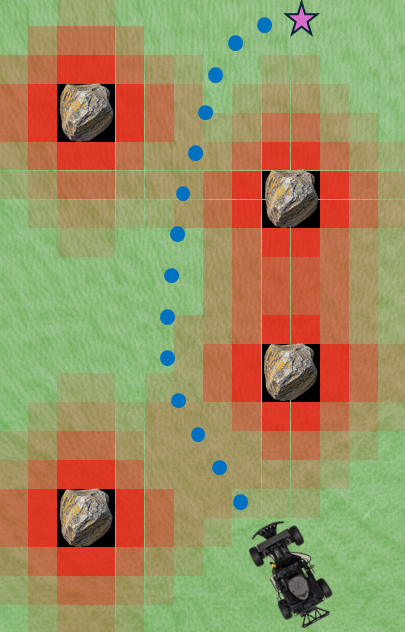}
    \caption{Problem overview for risk-aware motion planning.}
    \label{fig:prob-statement}
\end{figure}

\subsection{MPC Formulation}
Consider the MPC formulation
\begin{subequations} \label{eq:mpc-gen}
\begin{align}
    &\min_{\mathbf{x}_k, \mathbf{u_k}} \sum_{k=0}^{N-1} \left[ (\mathbf{x}_k - \mathbf{x}_k^r)^T Q_k (\mathbf{x}_k - \mathbf{x}_k^r) + \mathbf{u}_k^T R_k \mathbf{u}_k + \right. \nonumber \\
    & \qquad \left. q^r(\mathbf{y}_k) \right] + (\mathbf{x}_N - \mathbf{x}_N^r)^T Q_N (\mathbf{x}_N - \mathbf{x}_N^r) + q^r(\mathbf{y}_N) \;, \label{eq:mpc-gen-cost} \\
    &\text{s.t.} \; \forall k \in \mathcal{K} = \{0, \cdots, N-1 \}\;: \nonumber \\ 
    &\phantom{\text{s.t.}} \; \mathbf{x}_{k+1} = A \mathbf{x}_k + B \mathbf{u}_k\;, \\
    &\phantom{\text{s.t.}} \; \mathbf{y}_{k} = H \mathbf{x}_k,\; \mathbf{y}_{N} = H \mathbf{x}_N,\; \mathbf{x}[0] = \mathbf{x}_0 \;, \\
    &\phantom{\text{s.t.}} \; \mathbf{x}_k \in \mathcal{X},\; \mathbf{x}_N \in \mathcal{X}_N,\;\mathbf{u}_k \in \mathcal{U}\;, \label{eq:mpc-gen-state-input-cons} \\
    &\phantom{\text{s.t.}} \; \mathbf{y}_{k}, \mathbf{y}_{N} \in \mathcal{F} = \bigcup_{i=1}^{n_F} \mathcal{F}_i \subset \real^n \;, \label{eq:mpc-gen-obs-avoid} \\
    &\phantom{\text{s.t.}} \; \mathbf{y}_k \in \mathcal{F}_i \Rightarrow q^r(\mathbf{y}_k) = q_{i}^r
    \;. \label{eq:mpc-gen-reg-dep-costs}
\end{align}
\end{subequations}
The initial state of the system is given by $\mathbf{x}[0]$. The $H$ matrix extracts the AV position from $\mathbf{x}_k$. 
The function $q^r: \real^n \rightarrow \real$ couples the system outputs to region-dependent costs via \eqref{eq:mpc-gen-reg-dep-costs}. In this paper, region-dependent costs are derived from occupancy probabilities in an OGM. Denoting $p_i$ as the occupancy probability of cell $i$, the simple linear scaling $q_{i}^r = \kappa p_i$ is used, though more complex relationships between the occupancy probability and the MPC cost function are possible. 

Eq.~\eqref{eq:mpc-gen} is solved over a receding horizon, such that the current state $\mathbf{x}_0$ is provided to the MPC controller at each update time and the optimal input for the first step $\mathbf{u}_0$ is applied to the system. Alternatively, the input at the current time step may be constrained such that $\mathbf{u}[0] = \mathbf{u}_0$ is added as a constraint to \eqref{eq:mpc-gen}. In this case, the optimal input at time step $k=1$, $\mathbf{u}_1$, is stored and then applied to the system at the start of the next iteration. This formulation is advantageous for many practical real-time MPC implementations for which the time to solve the MPC optimization problem is not much less than the loop time of the controller \cite{aksland2022hierarchical}.  

The following assumptions are used throughout the rest of this paper. 

\begin{assumption} \label{as:Pk-diag}
$Q_k$, $R_k$, and $Q_N$ are diagonal with non-negative elements $\forall k \in \mathcal{K}$.
\end{assumption}
\begin{assumption} \label{as:X-U-XT}
$\mathcal{X}$, $\mathcal{U}$, and $\mathcal{X}_N$ are polytopes consistent with \cite[Def.~1]{scott2016constrained}.
\end{assumption}
\begin{assumption} \label{as:F-polytope}
$\mathcal{F}$ is a union of polytopes $\mathcal{F}_i$, $i \in \{1, ..., n_F\}$.
\end{assumption}
Assumption~\ref{as:Pk-diag} is required to exploit the structure of the hybrid zonotope constraints described in Sec.~\ref{sec:qp-cons-zono}.  Diagonal cost matrices are common in practice and can often be achieved under a change of variables if necessary \cite{lay2003linear}. Non-negative elements ensure that the MPC optimization problem is mixed-integer convex.
Assumption~\ref{as:X-U-XT} is typical for MPC formulations and Assumption~\ref{as:F-polytope} holds for obstacle avoidance problems with position bounds and polytopic obstacles.

\subsection{Hybrid Zonotope Representations of Obstacle-Free Space} \label{sec:preliminaries_zono}

We next briefly review the \emph{zonotope}, \emph{constrained zonotope}, and \emph{hybrid zonotope} set representations. A set $\mathcal{Z} \subset \real^n$ is a zonotope if $\exists \; G_c \in \real^{n \times n_g}$, $\mathbf{c} \in \real^{n}$ such that
\begin{equation} \label{eq:zonotope}
\mathcal{Z} = \left\{ G_c \bm{\xi}_c + \mathbf{c} \middle| \bm{\xi}_c \in \mathcal{B}_{\infty}^{n_g} \right\} \;,
\end{equation}
where $\mathcal{B}_\infty^{n_g} = \{\xi_c\in\mathbb{R}^{n_g} \mid \lVert \xi_c \rVert_\infty\leq1\}$ is the infinity-norm ball. Zonotopes are convex, centrally symmetric sets \cite{ziegler2012lectures}.

A set $\mathcal{Z_C} \subset \real^n$ is a constrained zonotope if $\exists \; G_c \in \real^{n \times n_g}$, $\mathbf{c} \in \real^{n}$, $A_c \in \real^{n_c \times n_g}$, $\mathbf{b} \in \real^{n_c}$ such that
\begin{equation} \label{eq:cons_zonotope}
\mathcal{Z_C} = \left\{ G_c \bm{\xi}_c + \mathbf{c} \middle| \bm{\xi}_c \in \mathcal{B}_{\infty}^{n_g},\; A_c \bm{\xi}_c = \mathbf{b} \right\} \;.
\end{equation} 
Constrained zonotopes can represent any polytope and can be constructed from a halfspace representation \cite[Thm 1]{scott2016constrained}. The shorthand notation $\mathcal{Z_C} = \langle G_c, \mathbf{c}, A_c, \mathbf{b} \rangle$ is used to denote a constrained zonotope.

Hybrid zonotopes extend \eqref{eq:cons_zonotope} by including binary factors $\bm{\xi}_b$. A set $\mathcal{Z_H} \subset \real^n$ is a hybrid zonotope if in addition to $G_c$, $\mathbf{c}$, $A_c$, and $\mathbf{b}$, $\exists\; G_b \in \real^{n \times n_b}$, $A_b \in \real^{n_c \times n_b}$ such that
\begin{equation} \label{eq:hyb_zonotope}
\mathcal{Z_H} = \left\{ \begin{bmatrix} G_c & G_b \end{bmatrix} 
\begin{bmatrix} \bm{\xi}_c \\ \bm{\xi}_b \end{bmatrix} + \mathbf{c} \middle| 
\begin{matrix}
\begin{bmatrix} \bm{\xi}_c \\ \bm{\xi}_b \end{bmatrix} \in \mathcal{B}_{\infty}^{n_g} \times \{-1,1\}^{n_b} \\
\begin{bmatrix} A_c & A_b \end{bmatrix} \begin{bmatrix} \bm{\xi}_c \\ \bm{\xi}_b \end{bmatrix} = \mathbf{b}
\end{matrix}
\right\} \;.
\end{equation}
Hybrid zonotopes can represent any union of polytopes \cite{bird2021unions, siefert2023reachability}. As such, by Assumption~\ref{as:F-polytope}, the obstacle-free space $\mathcal{F}$ in \eqref{eq:mpc-gen} can be represented as a hybrid zonotope. The shorthand notation $\mathcal{Z_H} = \langle G_c, G_b, \mathbf{c}, A_c, A_b, \mathbf{b} \rangle$ is used to denote a hybrid zonotope.

\begin{proposition}
The hybrid zonotope $\mathcal{Z}_{\mathcal{H}} = \left\langle G_c, G_b, \mathbf{c}, A_c, A_b, \mathbf{b} \right\rangle$ can be equivalently expressed as $\Bar{\mathcal{Z}}_{\mathcal{H}} = \left\langle \bar{G}_c, \Bar{G}_b, \Bar{\mathbf{c}}, \Bar{A}_c, \Bar{A}_b, \Bar{\mathbf{b}} \right\rangle$ where the constraints on the factors have been modified such that
\begin{equation} \label{eq:hyb_zonotope_0_1}
\Bar{\mathcal{Z}}_\mathcal{H} = \left\{ \begin{bmatrix} \Bar{G}_c & \Bar{G}_b \end{bmatrix} 
\begin{bmatrix} \Bar{\bm{\xi}}_c \\ \Bar{\bm{\xi}}_b \end{bmatrix} + \Bar{\mathbf{c}} \middle| 
\begin{matrix}
\begin{bmatrix} \Bar{\bm{\xi}}_c \\ \Bar{\bm{\xi}}_b \end{bmatrix} \in [0,1]^{n_g} \times \{0,1\}^{n_b} \\
\begin{bmatrix} \Bar{A}_c & \Bar{A}_b \end{bmatrix} \begin{bmatrix} \Bar{\bm{\xi}}_c \\ \Bar{\bm{\xi}}_b \end{bmatrix} = \Bar{\mathbf{b}}
\end{matrix}
\right\} \;,
\end{equation}
using the matrix transformations
\begin{subequations}
\begin{align}
    &\Bar{G}_c = 2 G_c \;,\; \Bar{G}_b = 2 G_b \;,\; \Bar{\mathbf{c}} = \mathbf{c} - \begin{bmatrix} G_c & G_b \end{bmatrix} \mathbf{1} \:,\\
    &\Bar{A}_c = 2 A_c \;,\; \Bar{A}_b = 2 A_b \;,\; \Bar{\mathbf{b}} = \mathbf{b} + \begin{bmatrix} A_c & A_b \end{bmatrix} \mathbf{1} \:.
\end{align}
\end{subequations}
\begin{proof}
    Define $G = \begin{bmatrix} G_c & G_b \end{bmatrix}$, $A = \begin{bmatrix} A_c & A_b \end{bmatrix}$, and $\bm{\xi} = \begin{bmatrix} \bm{\xi}_c^T & \bm{\xi}_b^T \end{bmatrix}^T$. Then the hybrid zonotope generator and constraint equations are given by
\begin{subequations}
\begin{align}
    &\mathbf{y} = G \bm{\xi} + \mathbf{c} \;, \\
    &\mathbf{0} = A \bm{\xi} - \mathbf{b} \;.
\end{align}
\end{subequations}
Take $\bm{\xi} = 2 \Bar{\bm{\xi}} - \mathbf{1}$, then
\begin{subequations}
\begin{align}
   \mathbf{y} &= G (2\Bar{\bm{\xi}} - \mathbf{1}) + \mathbf{c} \;, \\
    &= 2G \Bar{\bm{\xi}} + \left( \mathbf{c} - G \mathbf{1} \right) \;,
\end{align}
\end{subequations}
and
\begin{subequations}
\begin{align}
    \mathbf{0} &= A (2 \Bar{\bm{\xi}} - \mathbf{1}) - \mathbf{b} \;, \\
    &= 2A \Bar{\bm{\xi}} - (\mathbf{b} + A\mathbf{1}) \;.
\end{align}
\end{subequations}
Thus $\Bar{G} = 2G$, $\Bar{\mathbf{c}} = \mathbf{c}-G \mathbf{1}$, $\Bar{A} = 2A$, and $\Bar{\mathbf{b}} = \mathbf{b} + A \mathbf{1}$ define a hybrid zonotope for which $\Bar{\bm{\xi}}_c \in [0,1]^{n_g}$ and $\Bar{\bm{\xi}}_b \in \{0,1\}^{n_b}$.
\end{proof}
\end{proposition}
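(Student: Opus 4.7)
The plan is to prove the proposition by a straightforward change of variables on the factor vector. The key observation is that the map $\bm{\xi} \mapsto \bar{\bm{\xi}} = (\bm{\xi} + \mathbf{1})/2$ is a bijection that sends $[-1,1]$ to $[0,1]$ componentwise and likewise sends $\{-1,1\}$ to $\{0,1\}$. So if I apply this substitution separately to the continuous and binary blocks, the new factor domain $[0,1]^{n_g} \times \{0,1\}^{n_b}$ is in one-to-one correspondence with the original $\mathcal{B}_\infty^{n_g} \times \{-1,1\}^{n_b}$, and it suffices to check that the generator and constraint equations transform exactly as claimed so the sets of representable points $\mathbf{y}$ coincide.

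First I would stack the generator matrices and factor vectors as $G = \begin{bmatrix} G_c & G_b\end{bmatrix}$, $A = \begin{bmatrix} A_c & A_b\end{bmatrix}$, and $\bm{\xi} = \begin{bmatrix}\bm{\xi}_c^T & \bm{\xi}_b^T\end{bmatrix}^T$, so the defining relations of $\mathcal{Z}_\mathcal{H}$ become $\mathbf{y} = G\bm{\xi} + \mathbf{c}$ and $A\bm{\xi} = \mathbf{b}$. Then I would substitute $\bm{\xi} = 2\bar{\bm{\xi}} - \mathbf{1}$ into both equations and collect terms. The generator equation yields $\mathbf{y} = 2G\bar{\bm{\xi}} + (\mathbf{c} - G\mathbf{1})$, from which I read off $\bar{G} = 2G$ and $\bar{\mathbf{c}} = \mathbf{c} - G\mathbf{1}$, matching the stated block formulas for $\bar{G}_c$, $\bar{G}_b$, $\bar{\mathbf{c}}$. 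The constraint equation yields $2A\bar{\bm{\xi}} = \mathbf{b} + A\mathbf{1}$, giving $\bar{A} = 2A$ and $\bar{\mathbf{b}} = \mathbf{b} + A\mathbf{1}$, matching the stated formulas for $\bar{A}_c$, $\bar{A}_b$, $\bar{\mathbf{b}}$.

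Finally I would close the argument by noting that since the substitution is a bijection between the two factor domains and the transformed equations are exactly equivalent to the originals under it, every point $\mathbf{y} \in \mathcal{Z}_\mathcal{H}$ arises from some $\bar{\bm{\xi}} \in [0,1]^{n_g} \times \{0,1\}^{n_b}$ satisfying the transformed constraint, and conversely every such $\bar{\bm{\xi}}$ produces a point in $\mathcal{Z}_\mathcal{H}$. Hence $\bar{\mathcal{Z}}_\mathcal{H} = \mathcal{Z}_\mathcal{H}$ as sets.

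There is no real obstacle here: the whole statement is a linear reparameterization, and the only thing one needs to be careful about is that the substitution respects both the continuous box $[-1,1]^{n_g}$ and the binary set $\{-1,1\}^{n_b}$ simultaneously, which it does since the map $\xi \mapsto (\xi+1)/2$ sends endpoints to endpoints. No separate case analysis for the binary versus continuous blocks is needed because the same affine map works on both.
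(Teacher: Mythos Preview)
Your proposal is correct and follows essentially the same approach as the paper: stack the matrices, substitute $\bm{\xi} = 2\bar{\bm{\xi}} - \mathbf{1}$, and read off the transformed data. The only difference is that you make the bijection between the factor domains and the resulting set equality slightly more explicit, which is a harmless (and arguably welcome) addition.
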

Unless otherwise specified, hybrid zonotopes of the form in~\eqref{eq:hyb_zonotope_0_1} are used throughout the rest of this paper. Constrained zonotopes with modified factors $\Bar{\bm{\xi}}_c \in [0,1]^{n_g}$ are also used. These forms are used to improve the clarity of the proofs in Sec.~\ref{sec:conv-relax} and to facilitate comparisons to commercial mixed-integer solvers.

Two methods to construct hybrid zonotopes are considered. For the case where the polytopes $\mathcal{F}_i$ that compose the obstacle-free space do not have a grid structure, a hybrid zonotope is constructed from a union of vertex representation (V-rep) polytopes using \cite[Thm 5]{siefert2023reachability}. Inputs to this conversion are a vertex matrix $V=[\mathbf{v}_1,\dots,\mathbf{v}_{n_v}]\in\mathbb{R}^{n\times n_v}$ and an incidence matrix $M\in\mathbb{R}^{n_v \times n_F}$, $M_{ij} \in \{0,1\}$ such that
\begin{equation}
    \mathcal{F}_i=\left\{ V \bm{\lambda} \middle| \begin{array}{cc}
             \lambda_j \in \begin{cases} [0,\ 1], \; \text{if}\ j\in \{ k\ |\ M_{(k,i)}=1 \} \\ \{0\}, \; \:\:\:\, \text{if}\ j\in \{ k\ |\ M_{(k,i)}=0 \} \end{cases}  \\
             \mathbf{1}^T_{n_v} \bm{\lambda} = 1 
        \end{array} \right\}\;,
\end{equation}%
where $n_v$ is the total number of vertices in the set and $n_F$ is the number of polytopes in \eqref{eq:mpc-gen-obs-avoid}. The hybrid zonotope $\mathcal{F}$ is constructed equivalently to that in \cite[Thm 5]{siefert2023reachability}, altered for the form given in \eqref{eq:hyb_zonotope_0_1}, by
    \begin{subequations} \label{eq:vrep2hybzono_sets}
    \begin{align}
        &\mathcal{Q}=\left\langle\begin{bmatrix}
            I_{n_v} \\ 0
        \end{bmatrix},\begin{bmatrix}
            0 \\ I_{n_F}
        \end{bmatrix},\begin{bmatrix}
            \mathbf{0}_{n_v} \\ \mathbf{0}_{n_F}
        \end{bmatrix}, \begin{bmatrix}
            \mathbf{1}_{n_v}^T\\ \mathbf{0}^T
        \end{bmatrix}, \begin{bmatrix}
            \mathbf{0}^T \\ \mathbf{1}_{n_F}^T
        \end{bmatrix},\begin{bmatrix}
            1 \\ 1
        \end{bmatrix}\right\rangle\:,\\
        &\mathcal{H}=\{\mathbf{h}\in\mathbb{R}^{n_v}~\vert~ \mathbf{h}\leq\mathbf{0}\}\:,\\
        &\mathcal{D}=\mathcal{Q}\cap_{[\mathbf{I}_{n_v}~-M]}\mathcal{H}\:,\\
        &\mathcal{F}=\begin{bmatrix}
            V & 0
        \end{bmatrix}\mathcal{D}\:.
    \end{align}
    \end{subequations}
The $\mathcal{F}_i$ are constructed in V-rep from a polytopic obstacle description using the Hertel and Mehlhorn algorithm \cite{o1998computational}. Hybrid zonotopes of this form have dimensions $n_g = 2n_v$, $n_b = n_F$, $n_c = n_v+2$ where $n_v$ is the total number of vertices and $n_F$ is the number of polytopes composing $\mathcal{F}$.

When the obstacle-free space $\mathcal{F}$ is given as an OGM, it can be expressed as the hybrid zonotope
\begin{equation} \label{eq:hybzono-occ-grid}
    \mathcal{F} = \left\langle \begin{bmatrix} d_x & 0 \\ 0 & d_y \end{bmatrix}, \begin{bmatrix} \mathbf{c}_{b1} & \cdots & \mathbf{c}_{b n_F} \end{bmatrix}, \begin{bmatrix} -d_x/2 \\ -d_y/2 \end{bmatrix}, \mathbf{0}^T, \mathbf{1}^T, 1 \right\rangle \;,
\end{equation}
where $\mathbf{c}_{bi}$ gives the coordinates for the center of cell $i$ and $d_x$ and $d_y$ give the cell dimensions \cite{robbins2024efficient}. 
The hybrid zonotope dimensions for this case are $n_g = 2$, $n_b = n_F$, and $n_c = 1$.

In both \eqref{eq:vrep2hybzono_sets} and \eqref{eq:hybzono-occ-grid}, each binary factor $\xi_{b,i}$ corresponds to the convex region $\mathcal{F}_i$ such that fixing the binary factors to be 
\begin{equation}
\xi_{b,j} = \begin{cases} 1 \;,\; \text{if }i=j \\ 0 \;,\; \text{otherwise} \end{cases} \;,
\end{equation}
results in the following constrained zonotope representing the $i^{\text{th}}$ convex region:
\begin{equation}
    \mathcal{F}_i = \left\langle G_c, G_b \bm{\xi}_b + \mathbf{c}, A_c, \mathbf{b} - A_b \bm{\xi}_b \right\rangle \;.
\end{equation}
For a given point in $\mathcal{F}$, only one binary factor can be non-zero due to the choice constraints $\mathbf{1}^T\bm{\xi}_b = 1$ and $\xi_{b,i} \in \{0,1\} ,\; \forall i \in \{1, ..., n_F\}$. These properties are exploited within the mixed-integer solver described in Sec.~\ref{sec:mi_solver}.

\subsection{Convex Relaxation and Convex Hull}
Define the set $\mathcal{A}(\mathbf{x}_c, \mathbf{x}_i)$ to be mixed-integer convex such that it represents the feasible space of a mixed-integer convex program where $\mathbf{x}_c \in \mathcal{B} \subseteq \real^{n_c}$ and $\mathbf{x}_i \in \mathcal{C} \subseteq \integer^{n_i}$ are continuous and integer variables, respectively \cite{lubin2022mixed}. The interval hull, convex relaxation, and convex hull are defined as follows.
\begin{defn}
The interval hull of a set $\mathcal{D} = \{d_1, ..., d_{n_D}\}$ is given by $\mathit{IH}(\mathcal{D}) = [\min{(\mathcal{D})}, \max{(\mathcal{D})}]$.
\end{defn}
\begin{defn} \label{def:conv-relax}
The convex relaxation $\mathit{CR}(\mathcal{A}) \supseteq \mathcal{A}$ is given by replacing the constraint $\mathbf{x}_i \in \mathcal{C} \subseteq \integer^{n_i}$ with its interval hull such that $\mathbf{x}_i \in \mathit{IH}(\mathcal{C}) \subseteq \real^{n_i}$.
\end{defn}
\begin{defn} \label{def:conv-hull}
The convex hull $\mathit{CH}(\mathcal{A})$ is the tightest convex set such that $\mathcal{A} \subseteq \mathit{CH}(\mathcal{A})$ and is given by the set of all convex combinations of points $\mathbf{x}_i$ in $\mathcal{A}$:
\begin{equation}
    \mathit{CH}(\mathcal{A}) = \left\{ \sum_i \lambda_i \mathbf{x}_i \middle| \mathbf{x}_i \in \mathcal{A} \;,\; \sum_i \lambda_i = 1 \;,\; \lambda_i \geq 0 \right\} \;.
\end{equation}
\end{defn}

\section{Convex Relaxations of Hybrid Zonotopes} \label{sec:conv-relax}

In mixed-integer programming, tight convex relaxations reduce the number of branch-and-bound iterations by producing less conservative objective function lower bounds that result in earlier pruning of sub-optimal branches \cite{belotti2016handling, bonami2015mathematical, frangioni2006perspective, karamanov2006branch, bertsimas2021unified, marcucci2024shortest}. Most mixed-integer AV motion planning formulations use the Big-M method \cite{ioan2021mixed} which is known to have weak convex relaxations \cite{bonami2015mathematical}.
This section shows that hybrid zonotopes of the forms given in \eqref{eq:vrep2hybzono_sets} and \eqref{eq:hybzono-occ-grid} have the property that their convex relaxation is their convex hull.

For a hybrid zonotope $\mathcal{F} = \left\langle G_c, G_b, \mathbf{c}, A_c, A_b, \mathbf{b} \right\rangle$, the convex relaxation is given as $\mathit{CR}(\mathcal{F}) = \left\langle [G_c \; G_b], [], \mathbf{c}, [A_c \; A_b], [], \mathbf{b} \right\rangle$. Theorem~\ref{thm:conv-relax-vrep} shows that for hybrid zonotopes constructed from V-rep polytopes using \cite[Thm 5]{siefert2023reachability}, the convex relaxation is the convex hull.

\begin{theorem} \label{thm:conv-relax-vrep}
    Given a hybrid zonotope $\mathcal{F} = \left\langle G_c, G_b, \mathbf{c}, A_c, A_b, \mathbf{b} \right\rangle$ constructed with \eqref{eq:vrep2hybzono_sets}, $\mathit{CR}(\mathcal{F}) = \mathit{CH}(\mathcal{F})$.
\end{theorem}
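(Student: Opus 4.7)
The plan is to establish $\mathit{CR}(\mathcal{F}) = \mathit{CH}(\mathcal{F})$ by proving the two inclusions separately: one comes essentially for free, and the other leverages the coupling structure in \eqref{eq:vrep2hybzono_sets}.

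The inclusion $\mathit{CH}(\mathcal{F}) \subseteq \mathit{CR}(\mathcal{F})$ is immediate. The relaxed set $\mathit{CR}(\mathcal{F})$ is a constrained zonotope and therefore convex, and by Definition~\ref{def:conv-relax} it contains $\mathcal{F}$. Since Definition~\ref{def:conv-hull} identifies $\mathit{CH}(\mathcal{F})$ as the smallest convex set containing $\mathcal{F}$, the inclusion follows.

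For the reverse inclusion $\mathit{CR}(\mathcal{F}) \subseteq \mathit{CH}(\mathcal{F})$, the plan is to unpack what $\mathit{CR}(\mathcal{F})$ looks like in the coordinates of the construction \eqref{eq:vrep2hybzono_sets}. Because convex relaxation commutes with linear image and with generalized intersection against a convex set, any point $\mathbf{y} \in \mathit{CR}(\mathcal{F})$ can be written as $\mathbf{y} = V\bm{\xi}_c$ for some $\bm{\xi}_c \in [0,1]^{n_v}$ and $\bm{\xi}_b \in [0,1]^{n_F}$ satisfying $\mathbf{1}_{n_v}^T\bm{\xi}_c = 1$, $\mathbf{1}_{n_F}^T\bm{\xi}_b = 1$, and the coupling $\bm{\xi}_c \leq M\bm{\xi}_b$. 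The decisive observation is then a one-line consequence of the coupling inequality: if $\xi_{c,j} > 0$, then $(M\bm{\xi}_b)_j > 0$, so some index $i$ satisfies $M_{ji} = 1$ and $\xi_{b,i} > 0$; since $M$ is a $0/1$ incidence matrix, $M_{ji} = 1$ means $\mathbf{v}_j$ is a vertex of $\mathcal{F}_i$, and hence $\mathbf{v}_j \in \mathcal{F}$. Letting $J = \{j : \xi_{c,j} > 0\}$, we then obtain $\mathbf{y} = \sum_{j \in J} \xi_{c,j}\mathbf{v}_j$ as a convex combination (nonnegative weights summing to one) of points of $\mathcal{F}$, placing $\mathbf{y}$ in $\mathit{CH}(\mathcal{F})$.

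The step that most warrants care is the opening move of the reverse inclusion: showing that $\mathit{CR}(\mathcal{F})$ really does reduce to the explicit projected form $\{V\bm{\xi}_c : \ldots\}$ used above, rather than to some more opaque expression inherited from the step-by-step construction $\mathcal{F} = [V\; 0](\mathcal{Q} \cap_{[I\; -M]} \mathcal{H})$. This is handled by noting that $\mathit{CR}$ simply removes integrality on $\bm{\xi}_b$ in \eqref{eq:hyb_zonotope_0_1} and therefore commutes with linear image and with intersection against the convex set $\mathcal{H}$. Once that bookkeeping is settled, the theorem rests on the single implication ``$\xi_{c,j} > 0 \Rightarrow \mathbf{v}_j \in \mathcal{F}$'' supplied by $\bm{\xi}_c \leq M\bm{\xi}_b$; no disjunctive/Balas-style decomposition or transportation-feasibility argument is required.
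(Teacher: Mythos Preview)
Your proof is correct and follows essentially the same route as the paper: both establish $\mathit{CH}(\mathcal{F})\subseteq\mathit{CR}(\mathcal{F})$ from convexity, then unpack $\mathit{CR}(\mathcal{F})$ in the $(\bm{\xi}_c,\bm{\xi}_b)$ coordinates of \eqref{eq:vrep2hybzono_sets} and observe that any $\mathbf{y}\in\mathit{CR}(\mathcal{F})$ is a convex combination of the columns of $V$, all of which lie in $\mathcal{F}$. The only cosmetic difference is that the paper simply drops the $\bm{\xi}_b$ and $M$ constraints and uses implicitly that every column of $V$ is a vertex of some $\mathcal{F}_i$ by construction, whereas you recover $\mathbf{v}_j\in\mathcal{F}$ for the active indices via the coupling $\bm{\xi}_c\le M\bm{\xi}_b$; either justification suffices.
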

\begin{proof}
    $\mathit{CR}(\mathcal{F})$ is convex and $\mathcal{F} \subseteq \mathit{CR}(\mathcal{F})$, therefore $\mathit{CH}(\mathcal{F})\subseteq \mathit{CR}(\mathcal{F})$. 

    From \eqref{eq:vrep2hybzono_sets},
    \begin{subequations} \label{eq:vrep2hybzono_sets_rewritten}
    \begin{align}
        &\mathcal{Q} = \left\{ \begin{bmatrix}
            \bm{\xi}_c \\
            \bm{\xi}_b 
        \end{bmatrix} \middle| \begin{matrix} \sum_i \xi_{ci} = 1 \;,\; \sum_i \xi_{ci} \in [0,1] \\ \sum_i \xi_{bi} = 1 \;,\; \xi_{bi} \in \{0,1\} \end{matrix} \right\} \;, \label{eq:Q_v2} \\
        &\mathcal{D} = \left\{\begin{bmatrix}
            \bm{\xi}_c \\
            \bm{\xi}_b 
        \end{bmatrix} \in Q \middle| \begin{bmatrix} I & -M \end{bmatrix} \begin{bmatrix}
            \bm{\xi}_c \\
            \bm{\xi}_b 
        \end{bmatrix} \in \mathcal{H} \right\} \;, \\
        &\mathcal{F} = \left\{ \begin{bmatrix} V & 0 \end{bmatrix} \begin{bmatrix}
            \bm{\xi}_c \\
            \bm{\xi}_b 
        \end{bmatrix} \middle| \begin{matrix} \sum_i \xi_{ci} = 1 \;,\; \xi_{ci} \in [0,1] \\ \sum_i \xi_{bi} = 1 \;,\; \xi_{bi} \in \{0,1\} \\
        \begin{bmatrix} I_{n_v} & -M \end{bmatrix} \begin{bmatrix} \bm{\xi}_c \\ \bm{\xi}_b \end{bmatrix} \in \mathcal{H}
        \end{matrix} \right\} \;. \label{eq:Z-vrep-gen}
    \end{align}
    \end{subequations}
    The convex relaxation of $\mathcal{F}$ is then given as
    \begin{subequations}
    \begin{align} \label{eq:z_vrep_CR}
        \mathit{CR}(\mathcal{F}) &= \left\{ \sum_i \xi_{ci} \mathbf{v}_i \middle| \begin{matrix} \sum_i \xi_{ci} = 1 \;,\; \xi_{ci} \in [0,1] \\
        \sum_i \xi_{bi} = 1 \;,\; \xi_{bi} \in [0,1] \\
        \begin{bmatrix} I_{n_v} & -M \end{bmatrix} \begin{bmatrix} \bm{\xi}_c \\ \bm{\xi}_b \end{bmatrix} \in \mathcal{H} \end{matrix}\right\} \;,\\
        &\subseteq \left\{ \sum_i \xi_{ci} \mathbf{v}_i \middle| \begin{matrix} \sum_i \xi_{ci} = 1 \;,\; \xi_{ci} \in [0,1] \end{matrix}\right\} \;.
    \end{align}
    \end{subequations}
   Thus $\mathit{CR}(\mathcal{F})\subseteq \mathit{CH}(\mathcal{F})$ and $\mathit{CR}(\mathcal{F}) = \mathit{CH}(\mathcal{F})$.    
\end{proof}

For OGMs, this paper constructs hybrid zonotope representations using \eqref{eq:hybzono-occ-grid}. The following lemmas and theorem can be used to show that the convex relaxation of a hybrid zonotope in this form is its convex hull.

\begin{lemma} \label{lem:conv_hull_pts}
    Given a hybrid zonotope $\mathcal{F}$ where
    \begin{equation} \label{eq:hybzono_pt_collection}
        \mathcal{F} = \left\langle [], \begin{bmatrix} \mathbf{c}_{b1} & \cdots & \mathbf{c}_{bn_F} \end{bmatrix}, \mathbf{0}, [], \mathbf{1}^T, 1 \right\rangle \;,
    \end{equation}
    $\mathit{CR}(\mathcal{F}) = \mathit{CH}(\mathcal{F})$.
    \begin{proof}
        The set $\mathcal{F}$ is the union of points $\mathbf{c}_{bi}$,
        \begin{subequations}
        \begin{align}
            \mathcal{F} &= \sum_i \lambda_i \mathbf{c}_{bi},\; \sum_i \lambda_i = 1,\, \lambda_i \in \{0, 1\} \;, \\
            &= \bigcup_i \mathbf{c}_{bi} \;.
        \end{align}
        \end{subequations}
        Taking the convex relaxation gives
        \begin{equation}
            \mathit{CR}(\mathcal{F}) = \sum_i \lambda_i \mathbf{c}_{bi},\; \sum_i \lambda_i = 1,\, \lambda_i \in [0, 1] \;, 
        \end{equation}
        which is the convex hull of the union of points $\mathbf{c}_{bi}$ by definition, thus $\mathit{CR}(\mathcal{F}) = \mathit{CH}(\mathcal{F})$.  
    \end{proof}
\end{lemma}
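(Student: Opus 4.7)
The plan is to unpack the hybrid zonotope definition in \eqref{eq:hybzono_pt_collection} using the $[0,1]$/$\{0,1\}$ factor form of \eqref{eq:hyb_zonotope_0_1} (since the given matrices are expressed in that form), show directly that $\mathcal{F}$ reduces to the finite collection $\{\mathbf{c}_{b1}, \ldots, \mathbf{c}_{bn_F}\}$, then compute the convex relaxation and recognize it as a set of convex combinations.

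First I would substitute $G_c = []$, $G_b = [\mathbf{c}_{b1}\;\cdots\;\mathbf{c}_{bn_F}]$, $\mathbf{c} = \mathbf{0}$, $A_c = []$, $A_b = \mathbf{1}^T$, and $\mathbf{b} = 1$ into \eqref{eq:hyb_zonotope_0_1}, giving
$$\mathcal{F} = \left\{ \sum_{i=1}^{n_F} \xi_{bi}\, \mathbf{c}_{bi} \; \middle| \; \sum_{i=1}^{n_F} \xi_{bi} = 1,\; \xi_{bi} \in \{0,1\} \right\}.$$
Because the binary constraint combined with the sum-to-one constraint forces exactly one $\xi_{bi}$ to equal $1$ and the remainder to be $0$, this set is $\bigcup_{i=1}^{n_F} \{\mathbf{c}_{bi}\}$, i.e., the finite set of the $n_F$ center points.

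Next I would form $\mathit{CR}(\mathcal{F})$ per Definition~\ref{def:conv-relax} by replacing $\xi_{bi} \in \{0,1\}$ with its interval hull $\xi_{bi} \in [0,1]$, obtaining
$$\mathit{CR}(\mathcal{F}) = \left\{ \sum_{i=1}^{n_F} \lambda_i\, \mathbf{c}_{bi} \;\middle|\; \sum_{i=1}^{n_F} \lambda_i = 1,\; \lambda_i \in [0,1] \right\}.$$
Comparing with Definition~\ref{def:conv-hull} applied to the point set $\mathcal{F} = \{\mathbf{c}_{b1},\ldots,\mathbf{c}_{bn_F}\}$ identified in the previous step, this is exactly $\mathit{CH}(\mathcal{F})$; the extra upper bound $\lambda_i \le 1$ in the relaxation is redundant since the constraints $\lambda_i \ge 0$ and $\sum_i \lambda_i = 1$ already imply it.

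I do not expect any real obstacle. The only care point is the factor convention: the matrices in \eqref{eq:hybzono_pt_collection} are given in the form of \eqref{eq:hyb_zonotope_0_1} rather than the original $\pm 1$ form of \eqref{eq:hyb_zonotope}, so I would state this explicitly at the start of the proof so the reader does not misinterpret $\bm{\xi}_b$ as living in $\{-1,1\}^{n_F}$. Everything else is a direct application of the definitions of $\mathit{CR}$ and $\mathit{CH}$.
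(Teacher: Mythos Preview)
Your proposal is correct and follows essentially the same argument as the paper: identify $\mathcal{F}$ as the finite point set via the choice constraint, then relax $\{0,1\}$ to $[0,1]$ and recognize the resulting set of convex combinations as $\mathit{CH}(\mathcal{F})$. Your added remarks about the $[0,1]$ factor convention and the redundancy of $\lambda_i\le 1$ are fine clarifications but not substantive departures from the paper's proof.
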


\begin{lemma} \label{lem:stacked_hz_conv_hull}
    Given hybrid zonotopes
    \begin{subequations} \label{eq:hybzono_relaxations_2_hybzonos}
    \begin{align}
        &\mathcal{F}_1 = \langle G_{c1}, G_{b1}, \mathbf{c}_1, A_{c1}, A_{b1}, \mathbf{b}_1 \rangle \;, \\
        &\mathcal{F}_2 = \langle G_{c2}, G_{b2}, \mathbf{c}_2, A_{c2}, A_{b2}, \mathbf{b}_2 \rangle \;,
    \end{align}
    \end{subequations}
    and their Minkowski sum \cite[Prop. 7]{bird2023hybrid}
    \begin{multline} \label{eq:hybzono_relaxations_Z_Z1_plus_Z2}
        \mathcal{F} = \mathcal{F}_1 \oplus \mathcal{F}_2 = \left\langle \begin{bmatrix} G_{c1} & G_{c2}\end{bmatrix}, \begin{bmatrix} G_{b1} & G_{b2}\end{bmatrix}, \mathbf{c}_1 + \mathbf{c}_2 \right., \\ \left. \begin{bmatrix} A_{c1} & 0 \\ 0 & A_{c2} \end{bmatrix}, \begin{bmatrix} A_{b1} & 0 \\ 0 & A_{b2} \end{bmatrix}, \begin{bmatrix} \textbf{b}_1 \\ \textbf{b}_2 \end{bmatrix} \right\rangle \;,
    \end{multline}
    if $\mathit{CR}(\mathcal{F}_1)=\mathit{CH}(\mathcal{F}_1)$ and $\mathit{CR}(\mathcal{F}_2)=\mathit{CH}(\mathcal{F}_2)$, then $\mathit{CR}(\mathcal{F}) = \mathit{CH}(\mathcal{F})$.
    \begin{proof}
       The convex relaxation of $\mathcal{F}$ is given as 
       \begin{multline}
           \mathit{CR}(\mathcal{F}) = \left\langle \begin{bmatrix}
               G_{c1} & G_{c2} & G_{b1} & G_{b2}
           \end{bmatrix}, [], \mathbf{c}_1 + \mathbf{c}_2,  \right. \\
           \left.
           \begin{bmatrix}
               A_{c1} & 0 & A_{b1} & 0 \\
               0 & A_{c2} & 0 & A_{b2} 
           \end{bmatrix}, [], \begin{bmatrix}
               \mathbf{b}_1 \\ \mathbf{b}_2
           \end{bmatrix}   \right\rangle \;.
       \end{multline}
        By \cite[Prop. 7]{bird2023hybrid}, $\mathit{CR}(\mathcal{F}) = \mathit{CR}(\mathcal{F}_1) \oplus \mathit{CR}(\mathcal{F}_2)$. If $\mathit{CR}(\mathcal{F}_1) = \mathit{CH}(\mathcal{F}_1)$ and $\mathit{CR}(\mathcal{F}_2) = \mathit{CH}(\mathcal{F}_2)$, then $\mathit{CR}(\mathcal{F}) = \mathit{CH}(\mathcal{F}_1 \oplus \mathcal{F}_2) = \mathit{CH}(\mathcal{F})$ using the fact that the Minkowski sum and convex hull commute.  
    \end{proof}
\end{lemma}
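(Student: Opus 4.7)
The plan is to establish $\mathit{CR}(\mathcal{F}) = \mathit{CH}(\mathcal{F})$ by reducing it to two structural facts about Minkowski sums: first, that the convex relaxation distributes over the Minkowski sum of hybrid zonotopes, and second, that the convex hull distributes over the Minkowski sum of arbitrary sets. Combined with the hypothesis $\mathit{CR}(\mathcal{F}_i) = \mathit{CH}(\mathcal{F}_i)$ for $i=1,2$, this yields the conclusion in a short chain of equalities.

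First I would read off $\mathit{CR}(\mathcal{F})$ from the explicit Minkowski sum formula \eqref{eq:hybzono_relaxations_Z_Z1_plus_Z2} by relabeling the binary generators $G_{b1}, G_{b2}$ and binary constraint blocks $A_{b1}, A_{b2}$ as continuous generators and continuous constraints whose factors are relaxed to $[0,1]$ per Definition~\ref{def:conv-relax}. The resulting constrained zonotope inherits exactly the block-diagonal constraint structure that \cite[Prop. 7]{bird2023hybrid} prescribes for the Minkowski sum, so (specializing that proposition to the case with no binary factors) one can identify $\mathit{CR}(\mathcal{F}) = \mathit{CR}(\mathcal{F}_1) \oplus \mathit{CR}(\mathcal{F}_2)$. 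Applying the hypothesis then gives $\mathit{CR}(\mathcal{F}) = \mathit{CH}(\mathcal{F}_1) \oplus \mathit{CH}(\mathcal{F}_2)$.

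The remaining step is the classical identity $\mathit{CH}(A) \oplus \mathit{CH}(B) = \mathit{CH}(A \oplus B)$, valid for any subsets $A,B$ of a real vector space. The inclusion $\mathit{CH}(A \oplus B) \subseteq \mathit{CH}(A) \oplus \mathit{CH}(B)$ follows because $A \oplus B \subseteq \mathit{CH}(A) \oplus \mathit{CH}(B)$ and Minkowski sums of convex sets are convex, so the right-hand side is a convex superset of $A \oplus B$. The reverse containment uses that any element of $\mathit{CH}(A)\oplus\mathit{CH}(B)$ has the form $\sum_i \alpha_i \mathbf{a}_i + \sum_j \beta_j \mathbf{b}_j$ with $\mathbf{a}_i \in A$, $\mathbf{b}_j \in B$, and that such a sum can be rewritten as a single convex combination $\sum_{i,j} \alpha_i \beta_j (\mathbf{a}_i + \mathbf{b}_j)$ of points $\mathbf{a}_i + \mathbf{b}_j \in A \oplus B$. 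Combining this with the previous paragraph gives $\mathit{CR}(\mathcal{F}) = \mathit{CH}(\mathcal{F}_1 \oplus \mathcal{F}_2) = \mathit{CH}(\mathcal{F})$.

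There is no deep obstacle here: the only real content is identifying that the hybrid zonotope Minkowski sum construction commutes with the relaxation of binary factors to continuous factors, which is transparent from inspecting the block-diagonal form in \eqref{eq:hybzono_relaxations_Z_Z1_plus_Z2}. The mildest subtlety is making the commutativity of convex hull with Minkowski sum fully rigorous, but this is standard convex geometry and can either be proved in two lines as above or cited directly.
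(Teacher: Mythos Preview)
Your proposal is correct and follows essentially the same route as the paper's proof: write $\mathit{CR}(\mathcal{F})$ in constrained-zonotope form, recognize via \cite[Prop.~7]{bird2023hybrid} that $\mathit{CR}(\mathcal{F}) = \mathit{CR}(\mathcal{F}_1)\oplus\mathit{CR}(\mathcal{F}_2)$, apply the hypothesis, and then invoke $\mathit{CH}(A)\oplus\mathit{CH}(B)=\mathit{CH}(A\oplus B)$. The only difference is that you spell out a two-line argument for the commutativity of convex hull with Minkowski sum, whereas the paper simply cites it as a known fact.
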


\begin{theorem} \label{thm:conv-relax-occ-grid}
    Given a hybrid zonotope $\mathcal{F}$ in the form of \eqref{eq:hybzono-occ-grid}, $\mathit{CR}(\mathcal{F}) = \mathit{CH}(\mathcal{F})$.
    \begin{proof}
        By \cite[Prop. 7]{bird2023hybrid}, $\mathcal{F} = \mathcal{F}_1 \oplus \mathcal{F}_2$ where 
        \begin{equation}
            \mathcal{F}_1 =\left\langle \begin{bmatrix} d_x & 0 \\ 0 & d_y \end{bmatrix}, [], \begin{bmatrix}
             -d_x/2 \\ -d_y/2
            \end{bmatrix}, [], [], [] \right\rangle \;,
        \end{equation}
        and $\mathcal{F}_2$ is given by \eqref{eq:hybzono_pt_collection}. $\mathcal{F}_1$ is convex because it is a zonotope, therefore $\mathit{CR}(\mathcal{F}_1) = \mathit{CH}(\mathcal{F}_1)$. By Lemma~\ref{lem:conv_hull_pts}, $\mathit{CR}(\mathcal{F}_2) = \mathit{CH}(\mathcal{F}_2)$. Lemma~\ref{lem:stacked_hz_conv_hull} applies and $\mathit{CR}(\mathcal{F}) = \mathit{CH}(\mathcal{F})$.
    \end{proof}
\end{theorem}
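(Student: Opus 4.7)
The plan is to decompose $\mathcal{F}$ as the Minkowski sum of a single rectangular cell (a zonotope) and a union-of-points hybrid zonotope, and then immediately invoke Lemmas~\ref{lem:conv_hull_pts} and \ref{lem:stacked_hz_conv_hull}. Concretely, I would define
\begin{equation*}
\mathcal{F}_1 = \left\langle \begin{bmatrix} d_x & 0 \\ 0 & d_y \end{bmatrix}, [], \begin{bmatrix} -d_x/2 \\ -d_y/2 \end{bmatrix}, [], [], [] \right\rangle,
\end{equation*}
which is a plain zonotope representing one grid cell centered at the origin, and let $\mathcal{F}_2$ be the hybrid zonotope of the form in \eqref{eq:hybzono_pt_collection} built from the centers $\mathbf{c}_{b1},\dots,\mathbf{c}_{bn_F}$.

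Next I would verify that $\mathcal{F}_1 \oplus \mathcal{F}_2$ reproduces \eqref{eq:hybzono-occ-grid}. Applying the Minkowski sum formula from \cite[Prop. 7]{bird2023hybrid} restated in \eqref{eq:hybzono_relaxations_Z_Z1_plus_Z2}, the continuous generator matrix becomes $[\,\mathrm{diag}(d_x,d_y)\;\;[]\,]$, the binary generator matrix becomes $[\,[]\;\;\mathbf{c}_{b1}\,\cdots\,\mathbf{c}_{bn_F}\,]$, the center is $[-d_x/2,-d_y/2]^T + \mathbf{0}$, and the constraint matrices and right-hand side reduce to $\mathbf{0}^T$, $\mathbf{1}^T$, and $1$ respectively (the block corresponding to $\mathcal{F}_1$ is empty), which matches \eqref{eq:hybzono-occ-grid} exactly.

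Finally, I would close the argument in three lines: $\mathcal{F}_1$ is a zonotope and hence convex, so trivially $\mathit{CR}(\mathcal{F}_1) = \mathcal{F}_1 = \mathit{CH}(\mathcal{F}_1)$; Lemma~\ref{lem:conv_hull_pts} gives $\mathit{CR}(\mathcal{F}_2) = \mathit{CH}(\mathcal{F}_2)$ since $\mathcal{F}_2$ has the required union-of-points form; and Lemma~\ref{lem:stacked_hz_conv_hull} then yields $\mathit{CR}(\mathcal{F}) = \mathit{CH}(\mathcal{F})$.

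I do not anticipate a real obstacle here, since the two lemmas were clearly designed to be combined for exactly this theorem. The only step requiring any care is the bookkeeping check that the Minkowski-sum decomposition reproduces \eqref{eq:hybzono-occ-grid} with the correct arrangement of empty blocks in the constraint matrices; if the empty-block convention were interpreted inconsistently the matrices would not line up, so I would make that check explicit before invoking the lemmas.
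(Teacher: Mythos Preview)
Your proposal is correct and matches the paper's proof essentially line for line: decompose $\mathcal{F}$ via \cite[Prop.~7]{bird2023hybrid} as the Minkowski sum of the zonotope $\mathcal{F}_1$ (a single cell) and the union-of-points hybrid zonotope $\mathcal{F}_2$ of \eqref{eq:hybzono_pt_collection}, then apply Lemma~\ref{lem:conv_hull_pts} and Lemma~\ref{lem:stacked_hz_conv_hull}. The explicit bookkeeping check that the Minkowski-sum formula reproduces \eqref{eq:hybzono-occ-grid} is a nice addition that the paper leaves implicit.
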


\section{Solution of Mixed-Integer MPC Problems Using Hybrid Zonotope Constraints} \label{sec:solver}
This section presents a multi-stage MIQP formulation and solution strategy for MPC motion planning problems where hybrid zonotopes are used to represent non-convex constraints on the system output.
The high-level MIQP structure is given in Sec.~\ref{sec:miqp}. The MIQP is built up starting with a convex problem formulation 
in Sec.~\ref{sec:baseline_prob_form}, and hybrid zonotope constraints are added in Sec.~\ref{sec:problem-formulation-hybzono}. The particular structure of this formulation is exploited within a branch-and-bound mixed-integer solver (Sec.~\ref{sec:mi_solver}), and a solver for QP sub-problems generated by the mixed-integer solver (Sec.~\ref{sec:qp_solver}).

\subsection{Problem Formulation} \label{sec:problem-formulation}

\subsubsection{Multi-Stage MIQP} \label{sec:miqp}
Based on Assumptions~\ref{as:X-U-XT} and \ref{as:F-polytope}, Eq.~\eqref{eq:mpc-gen} can be written as the multi-stage MIQP
\begin{subequations} \label{eq:miqp-multistage}
\begin{align} 
&\mathbf{z}^* = \argmin_{\mathbf{z}} \sum_{k=0}^{N} \frac{1}{2} \mathbf{z}_k^T P_k \mathbf{z}_k + \mathbf{q}_k^T \mathbf{z}_k\;, \label{eq:miqp-multistage-cost} \\
&\text{s.t.} \; \mathbf{0} = C_k \mathbf{z}_k + D_{k+1} \mathbf{z}_{k+1} + \mathbf{c}_k,\; \forall k \in \mathcal{K}\;, \label{eq:miqp-multistage-eqcons}\\
&\phantom{\text{s.t.}} \; G_k \mathbf{z}_k \leq \mathbf{w}_k,\; \forall k \in \mathcal{K} \cup N \;,\label{eq:miqp-multistage-ineqcons}
\end{align}
\end{subequations}
with $\mathbf{z}_k = \begin{bmatrix} \mathbf{x}_k^T & \mathbf{u}_k^T & \bm{\alpha}_{k}^T \end{bmatrix}^T$. The vector $\bm{\alpha}_k \in \real^{n_{\mathit{\alpha c} k}} \times \integer^{n_{\mathit{\alpha b} k}}$ denotes variables used to define constraints. The binary variables in $\bm{\alpha}_k$ are required to be in the form of a choice constraint where each binary variable corresponds to a convex region of the free space. This is the case for hybrid zonotopes constructed from V-rep polytopes and for hybrid zonotope representations of OGMs. Unions of H-rep polytopes can also be formulated with choice constraints on the binary variables using the Big-M method~\cite{robbins2024efficient}. The constraints \eqref{eq:mpc-gen-state-input-cons} and \eqref{eq:mpc-gen-obs-avoid} can be softened as is often required for practical MPC implementations \cite{borrelli2017predictive}. In this case, $\bm{\alpha}_k$ will contain slack variables as discussed in Sections~\ref{sec:baseline_prob_form} and \ref{sec:problem-formulation-hybzono}.

\subsubsection{Convex Problem Formulation} \label{sec:baseline_prob_form}
The convex problem formulation corresponds to \eqref{eq:mpc-gen} without non-convex constraints or region-dependent costs (i.e., without \eqref{eq:mpc-gen-obs-avoid} and \eqref{eq:mpc-gen-reg-dep-costs} and with $q^r(\mathbf{y}_k)$ removed from \eqref{eq:mpc-gen-cost}). Non-convex constraints and region-dependent costs will be treated in Sec.~\ref{sec:problem-formulation-hybzono}.

The polytopic state, terminal state, and input constraint sets $\mathcal{X}$, $\mathcal{X}_N$, and $\mathcal{U}$, respectively, are each constructed using constrained zonotopes for a linear map of the corresponding signal and box constraints for the signal itself. These sets are given as
\begin{subequations} \label{eq:state-input-cons-conzono}
\begin{align}
&\mathcal{X} = \left\{\mathbf{x} \middle| H_x \mathbf{x} \in \mathcal{Z}_{cx},\; \mathbf{x} \in [\mathbf{x}^l, \mathbf{x}^u] \right\} \;, \\
&\mathcal{X}_N = \left\{\mathbf{x} \middle| H_{xN} \mathbf{x} \in \mathcal{Z}_{cxN},\; \mathbf{x} \in [\mathbf{x}_N^l, \mathbf{x}_N^u] \right\} \;, \\
&\mathcal{U} = \left\{\mathbf{u} \middle| H_u \mathbf{u} \in \mathcal{Z}_{cu},\; \mathbf{u} \in [\mathbf{u}^l, \mathbf{u}^u] \right\} \;,
\end{align}
\end{subequations}
where $\mathcal{Z}_{cx} = \left\langle G_{cx}, \mathbf{c}_{cx}, A_{cx}, \mathbf{b}_{cx} \right\rangle$, $\mathcal{Z}_{cxN} = \left\langle G_{cxN}, \mathbf{c}_{cxN}, A_{cxN}, \mathbf{b}_{cxN} \right\rangle$, and $\mathcal{Z}_{cu} = \left\langle G_{cu}, \mathbf{c}_{cu}, A_{cu}, \mathbf{b}_{cu} \right\rangle$ are constrained zonotopes. The matrices $H_x$, $H_{x N}$, and $H_u$ extract the signals which participate in the constrained zonotope constraints.
It will be shown that constraints of this form ensure that the inequality constraint matrices $G_k$, $k \in \{0, ..., N\}$ are in box constraint form and are full column rank. This structure is exploited in the QP solver presented in Sec.~\ref{sec:qp_solver}. H-rep polytopic constraints, by comparison, do not in general have this structure \cite{domahidi2012efficient}.

The constrained zonotope constraints in \eqref{eq:state-input-cons-conzono} can be softened while the box constraints must be hard to ensure the required structure for $G_k$. In practice, the box constraint bounds on a variable $z_i$ may be set to large values when approximating $z_i \in [-\infty, \infty]$ or when $z_i$ is subject to constrained zonotope or hybrid zonotope constraints. The form in \eqref{eq:state-input-cons-conzono} is flexible such that hard box constraints can be used directly where applicable without requiring redundant definitions of those constraints within the constrained zonotopes.

Expressing \eqref{eq:mpc-gen-state-input-cons} using \eqref{eq:state-input-cons-conzono} requires the introduction of additional optimization variables $\bm{\xi}_{cx}$ and $\bm{\xi}_{cu}$ to the convex problem formulation. The vector of optimization variables then becomes $\mathbf{z}_{k}^{c} = \begin{bmatrix}
    \mathbf{x}_k^T & \mathbf{u}_k^T & \bm{\xi}_{cx k}^T & \bm{\xi}_{cu k}^T & \bm{\sigma}_{cx k}^T & \bm{\sigma}_{cu k}^T
\end{bmatrix}^T$ for $k \in \mathcal{K}$ where $\bm{\sigma}_{cx k}$ and $\bm{\sigma}_{cu k}$ denote optional slack variables corresponding to the constrained zonotope state and input constraints, respectively. Variables corresponding to the control inputs disappear at $k=N$ such that $\mathbf{z}_N^{c} = \begin{bmatrix}
    \mathbf{x}_N^T & \bm{\xi}_{cx N}^T & \bm{\sigma}_{cx N}^T
\end{bmatrix}^T$. To ensure that the inequality constraint matrices $G_k$ are full column rank, the slack variables are subject to box constraints $\bm{\sigma}_{cxk} \in [-\bm{\sigma}_{cx}^u, \bm{\sigma}_{cx}^u]$ and $\bm{\sigma}_{cuk} \in [-\bm{\sigma}_{cu}^u, \bm{\sigma}_{cu}^u]$ for $k \in \{0, ..., N-1\}$ and $\bm{\sigma}_{cxN} \in [-\bm{\sigma}_{cxN}^u, \bm{\sigma}_{cxN}^u]$.

For $k \in \{0, ..., N-1\}$, the cost function \eqref{eq:miqp-multistage-cost} matrices and vectors within the convex problem are 
\begin{subequations} \label{eq:conv-prob-cost}
\begin{align}
&P_k^{c} = \text{blkdiag} \left( \begin{bmatrix}
    Q_k & R_k & 0 & 0 & W_{cxk} & W_{cuk}
\end{bmatrix} \right) \;, \\
&\mathbf{q}_k^{c} = \begin{bmatrix}
    -(Q_k \mathbf{x}_k^r)^T & \mathbf{0}^T & \mathbf{0}^T & \mathbf{0}^T & \mathbf{0}^T & \mathbf{0}^T
\end{bmatrix}^T \;.
\end{align}
\end{subequations}
The matrices $W_{cxk}$ and $W_{cuk}$ are diagonal and define the cost for nonzero values of the slack variables. For $k=N$, \eqref{eq:conv-prob-cost} is modified to remove costs corresponding to the control inputs and $Q_N$ and $W_{cx N}$ are substituted for $Q_k$ and $W_{cx k}$. The diagonal matrix $W_{cx N}$ assigns costs to the terminal state slack variables $\bm{\sigma}_{cxN}$. 

For $k \in \{1, ..., N-2\}$, the matrices and vectors that define the state and input equality constraints \eqref{eq:miqp-multistage-eqcons} in the convex problem formulation are
\begin{subequations} \label{eq:conv-prob-eq-cons}
\begin{align}
&C_k^{c} = \begin{bmatrix}
    A & B & 0 & 0 & 0 & 0 \\
    H_x & 0 & -G_{cx} & 0 & I & 0 \\
    0 & H_u & 0 & -G_{cu} & 0 & I \\
    0 & 0 & A_{cx} & 0 & 0 & 0 \\
    0 & 0 & 0 & A_{cu} & 0 & 0
\end{bmatrix} \;, \\
&D_{k+1}^{c} = \begin{bmatrix}
    -I & 0 & 0 & 0 & 0 & 0 \\
    0 & 0 & 0 & 0 & 0 & 0 \\
    0 & 0 & 0 & 0 & 0 & 0 \\
    0 & 0 & 0 & 0 & 0 & 0 \\
    0 & 0 & 0 & 0 & 0 & 0
\end{bmatrix} \;, \\
&\mathbf{c}_k^{c} = \begin{bmatrix}
    \mathbf{0}^T & -\mathbf{c}_{cx}^T & -\mathbf{c}_{cu}^T & -\mathbf{b}_{cx}^T & -\mathbf{b}_{cu}^T
\end{bmatrix}^T \;.
\end{align}
\end{subequations}
For $k=0$, \eqref{eq:conv-prob-eq-cons} is modified to add the constraint $-I \mathbf{x}_k + \mathbf{x}_0 = \mathbf{0}$. For the case that the initial control input is fixed, as is common to account for computation times in real-time MPC formulations, the constraint $-I \mathbf{u}_k + \mathbf{u}_0 = \mathbf{0}$ is additionally included. 
The matrices and vector used in \eqref{eq:miqp-multistage-eqcons} at $k=N-1$ and $k=N$ are 
\begin{subequations} \label{eq:conv-prob-eq-cons-term}
\begin{align}
&C_{N-1}^{c} = \begin{bmatrix}
    A & B & 0 & 0 & 0 & 0 \\
    H_x & 0 & -G_{cx} & 0 & I & 0 \\
    0 & H_u & 0 & -G_{cu} & 0 & I \\
    0 & 0 & A_{cx} & 0 & 0 & 0 \\
    0 & 0 & 0 & A_{cu} & 0 & 0 \\
    0 & 0 & 0 & 0 & 0 & 0 \\
    0 & 0 & 0 & 0 & 0 & 0
\end{bmatrix} \;, \\
&D_{N}^{c} = \begin{bmatrix}
    -I & 0 & 0 \\
    0 & 0 & 0 \\
    0 & 0 & 0 \\
    0 & 0 & 0 \\
    0 & 0 & 0 \\
    H_{x N} & -G_{cx N} & I \\
    0 & A_{cx N} & 0
\end{bmatrix} \;, \\
&\mathbf{c}_{N-1}^{c} = \left[ \begin{matrix}
    \mathbf{0}^T & -\mathbf{c}_{cx}^T & -\mathbf{c}_{cu}^T & -\mathbf{b}_{cx}^T & -\mathbf{b}_{cu}^T \end{matrix} \right. \nonumber \\
    & \qquad \qquad \qquad \qquad \qquad \qquad \; \left. \begin{matrix} -\mathbf{c}_{cx N}^T & -\mathbf{b}_{cx N}^T \end{matrix} \right]^T \;.
\end{align}
\end{subequations}

For $k \in \{0, ..., N-1\}$, the convex problem matrices and vectors used in the inequality constraints \eqref{eq:miqp-multistage-ineqcons} are
\begin{subequations} \label{eq:conv-prob-ineq-cons}
\begin{align}
&G_k^{c} = \text{blkdiag}\left(\begin{bmatrix} -I \\ I \end{bmatrix}, \begin{bmatrix} -I \\ I \end{bmatrix}, \begin{bmatrix} -I \\ I \end{bmatrix}, \right. \nonumber \\
& \qquad \qquad \qquad \; \, \left. \begin{bmatrix} -I \\ I \end{bmatrix}, \begin{bmatrix} -I \\ I \end{bmatrix}, \begin{bmatrix} -I \\ I \end{bmatrix} \right) \; , \\
&\mathbf{w}_k^{c} = \left[ \begin{matrix}
    -(\mathbf{x}^l)^T & (\mathbf{x}^u)^T & -(\mathbf{u}^l)^T & (\mathbf{u}^u)^T & \mathbf{0}^T & \mathbf{1}^T \end{matrix} \right. \nonumber \\
    & \qquad\quad \left. \begin{matrix} \mathbf{0}^T & \mathbf{1}^T & (\bm{\sigma}_{cx}^u)^T & (\bm{\sigma}_{cx}^u)^T & (\bm{\sigma}_{cu}^u)^T & (\bm{\sigma}_{cu}^u)^T
\end{matrix} \right]^T \;.
\end{align}
\end{subequations}
For $k=N$, \eqref{eq:conv-prob-ineq-cons} is modified by removing constraints on the inputs and substituting $\mathbf{x}_N^l$, $\mathbf{x}_N^u$, and $\bm{\sigma}_{cx N}^u$ for $\mathbf{x}^l$, $\mathbf{x}^u$, and $\bm{\sigma}_{cx}^u$, respectively. Clearly, the inequality constraint matrices $G_k^c$ are in box constraint form for all $k \in \{0, ..., N\}$.

\subsubsection{Hybrid Zonotope Constraints} \label{sec:problem-formulation-hybzono}
The hybrid zonotope constraints are built on top of the convex problem formulation. Additional optimization variables are introduced for the continuous and binary factors $\bm{\xi}_{ck} \in [0,1]^{n_g}$ and $\bm{\xi}_{bk} \in \{0,1\}^{n_b}$, respectively, and for the slack variables $\bm{\sigma}_{k}$ in the case of softened hybrid zonotope constraints. The full vector of optimization variables corresponding for time step $k \in \mathcal{K} \cup N$, is given by $\mathbf{z}_k = \begin{bmatrix} (\mathbf{z}_k^c)^T & \bm{\xi}_{ck}^T & \bm{\xi}_{bk}^T & \bm{\sigma}_k^T \end{bmatrix}^T$. For all $k \in \mathcal{K} \cup N$, the cost function matrices and vectors in \eqref{eq:miqp-multistage-cost} are
\begin{subequations} \label{eq:cost_fcn_Pk_qk}
\begin{align}
&P_k = \text{blkdiag} \left( \begin{bmatrix}
    P_k^{c} & 0 & 0 & W_k 
\end{bmatrix} \right) \;, \label{eq:cost_fcn_Pk} \\
&\mathbf{q}_k = \begin{bmatrix}
    (\mathbf{q}_k^{c})^T & \mathbf{0}^T & (\mathbf{q}^{r})^T & \mathbf{0}^T
\end{bmatrix}^T \;, \label{eq:cost_fcn_qk}
\end{align}
\end{subequations}
where $W_k$ is a diagonal quadratic cost matrix for the slack variables $\bm{\sigma}_k$, and $\mathbf{q}^r = \begin{bmatrix} q_{1}^r & \cdots & q_{n_F}^r \end{bmatrix}$ is the vector of linear costs associated with the region selections.

Using the hybrid zonotope $\mathcal{Z}_H = \langle G_c^{hz}, G_b^{hz}, \mathbf{c}^{hz}, A_c^{hz}, A_b^{hz}, \mathbf{b}^{hz} \rangle$ such that $H \mathbf{x}_k \in \mathcal{Z}_H$, the matrices and vectors for the full problem equality constraints \eqref{eq:miqp-multistage-eqcons} are given for $k \in \{0, ..., N-2\}$ by 
\begin{subequations} \label{eq:C_k-hybzono}
\begin{align}
&C_k = \begin{bmatrix} C_k^{c} & 0 & 0 & 0 \\
\begin{bmatrix} H & 0 \end{bmatrix} & -G_c^{hz} & -G_b^{hz} & I \\
0 & A_c^{hz} & A_b^{hz} & 0
\end{bmatrix} \;, \\
&D_{k+1} = \begin{bmatrix}
D_k^{c} & 0 & 0 & 0 \\
0 & 0 & 0 & 0 \\ 
0 & 0 & 0 & 0
\end{bmatrix} \;, \\
&\mathbf{c}_k = \begin{bmatrix}
    (\mathbf{c}_k^{c})^T & -(\mathbf{c}^{hz})^T & -(\mathbf{b}^{hz})^T
\end{bmatrix}^T \;.
\end{align}
\end{subequations}
The terminal equality constraint matrices and vectors are given by
\begin{subequations}
\begin{align}
    &C_{N-1} = \begin{bmatrix} C_{N-1}^{c} & 0 & 0 & 0 \\
\begin{bmatrix} H & 0 \end{bmatrix} & -G_c^{hz} & -G_b^{hz} & I \\
0 & A_c^{hz} & A_b^{hz} & 0 \\ 
0 & 0 & 0 & 0 \\
0 & 0 & 0 & 0
\end{bmatrix} \;, \\
&D_N = \begin{bmatrix}
D_N^{c} & 0 & 0 & 0 \\
0 & 0 & 0 & 0 \\
0 & 0 & 0 & 0 \\
0 & -G_c^{hz} & -G_b^{hz} & I \\
0 & A_c^{hz} & A_b^{hz} & 0
\end{bmatrix} \;, \\
&\mathbf{c}_{N-1} = \left[ \begin{matrix}
    (\mathbf{c}_{N-1}^{c})^T & -(\mathbf{c}^{hz})^T & -(\mathbf{b}^{hz})^T \end{matrix} \right. \nonumber \\
& \qquad \qquad \qquad \qquad \; \, \left. \begin{matrix}-(\mathbf{c}^{hz})^T & -(\mathbf{b}^{hz})^T
\end{matrix} \right]^T \;.
\end{align}
\end{subequations}

For all $k \in \{0, ..., N\}$, the inequality constraints for the full problem formulation are constructed by appending the hybrid zonotope box constraints slack variable bounds to $G_k^{c}$ as
\begin{subequations}
\begin{align}
&G_k = \begin{bmatrix}
    G_k^{c} & 0 & 0 & 0 \\
    0 & -I & 0 & 0 \\
    0 & I & 0 & 0 \\
    0 & 0 & -I & 0 \\
    0 & 0 & I & 0 \\
    0 & 0 & 0 & -I \\
    0 & 0 & 0 & I
\end{bmatrix} \;, \label{eq:ineq_cons_Gk} \\
&\mathbf{w}_k = \begin{bmatrix}
    \mathbf{w}_k^{c T} & \mathbf{0}^T & \mathbf{1}^T & \mathbf{0}^T & \mathbf{1}^T & -(\bm{\sigma}^u)^T & (\bm{\sigma}^u)T
\end{bmatrix}^T \;.
\end{align}
\end{subequations}
The slack variables are subject to the box constraint $\bm{\sigma} \in [-\bm{\sigma}^u, \bm{\sigma}^u]$. The inequality constraint matrices $G_k$ for the full problem formulation can be seen to be in box constraint form.

\subsection{Mixed-Integer Solver} \label{sec:mi_solver}
This section describes a mixed-integer solver designed to solve the multi-stage MIQP described in Sec.~\ref{sec:problem-formulation}. This solver utilizes a branch-and-bound solution strategy with support for concurrent exploration of multiple branches via multithreading. Throughout this section, the shorthand notation $\mathbf{r}_k = \mathbf{f}_r(\bm{\xi}_{b k})$ is used to represent a set of region indices at time step $k$ where index $i$ corresponds to region $\mathcal{F}_i$ in \eqref{eq:mpc-gen-obs-avoid}. The collection of $\mathbf{r}_k$ across all time steps is denoted as $\bar{\mathbf{r}} = \{\mathbf{r}_0, ..., \mathbf{r}_N\}$.

\subsubsection{Branch-and-Bound Overview}

Branch-and-bound algorithms solve mixed-integer convex programs by exploring a tree of convex programs for which integer variables in the original program are relaxed to continuous variables. The convex programs generated by the mixed-integer solver are referred to as ``nodes".  The top-level node in the tree, referred to as the ``root node'', is the convex relaxation of the original mixed-integer program. From a given node, ``children nodes" are constructed by fixing integer variables or applying additional constraints (e.g., $z_i \in [0,1] \rightarrow z_i=1$). This operation is known as ``branching''. 

The optimal objective of a node is a lower bound on the objective of that node's children as all optimization variables $\mathbf{z}$ that are feasible for the child node are also feasible for its parent. Branch-and-bound algorithms leverage the lower bounds by eliminating nodes
for which the cost lower bound is greater than the cost of the best identified feasible solution, called the ``incumbent''. The algorithm continues until there are either no nodes left to be explored or the objective lower bound and the objective of the incumbent are within some convergence tolerance.

\subsubsection{Reachability Constraints}
The mixed-integer solver described in this paper requires the multistage structure given in \eqref{eq:miqp-multistage}. Additionally, each binary factor $\xi_{bk, i}$ must correspond to a convex region of the obstacle-free space $\mathcal{F}_i$ and only one binary factor can be non-zero for each time step $k$. These properties hold for the hybrid zonotopes used in this paper as discussed in Sec.~\ref{sec:problem-formulation-hybzono}. To efficiently solve problems with this structure, the mixed integer solver presented in this paper uses a notion of ``reachability'' between region selections across time steps, originally introduced in \cite{robbins2024efficient}.
We define this notion of reachability as follows.
\begin{defn} \label{def:reachability}
A region $r_i$ is $k_n$ steps reachable from a point $\mathbf{y}_0$ (or another region $r_j$) if $\lfloor (d_r/d_{max}) \rfloor \geq k_n$, where $d_r$ is the length of the shortest line segment from $\mathbf{y}_0$ (or $r_j$) to $r_i$ and $d_{max}$ is the maximum distance that the AV can travel in a single time step of the MPC problem. 
\end{defn}

This is illustrated in Fig.~\ref{fig:constraint_region_reachability}, where region 3 is 2 steps reachable from the AV's current position (indicated by the green diamond), and region 6 is 3 steps reachable from region 9 and vice-versa. 

Algorithm~\ref{alg:r2r-p2r-reachability} returns the indices of the regions that can be reached within $k_n$ time steps of a point or region. Lines~\ref{algline:qp-region} and \ref{algline:qp-point} require the solution of a QP. 
Mehrotra's primal-dual interior point method for QPs is used for this purpose \cite{mehrotra1992implementation, borrelli2017predictive}.
In this implementation, the ``region" case QPs are solved offline because the obstacle map is assumed fully known and time invariant, while the ``point" case QPs are solved before the start of every MPC iteration. Reachability is only re-computed for regions that were reachable within the MPC horizon in the previous iteration and regions that are 1 step reachable from those regions.
These calculations are used to produce a point case lookup table $R_p(k_n)$ and a region case lookup table $R_r(k_n, r)$. The point case lookup table returns region indices $\mathbf{r}$ that are $k_n$ steps reachable from the initial position $H \mathbf{x}_0$. The region case lookup table returns region indices $\mathbf{r}$ that are $k_n$ steps reachable from the region $r$.

\begin{figure}[tb]
    \centering
    \includegraphics[width=\linewidth]{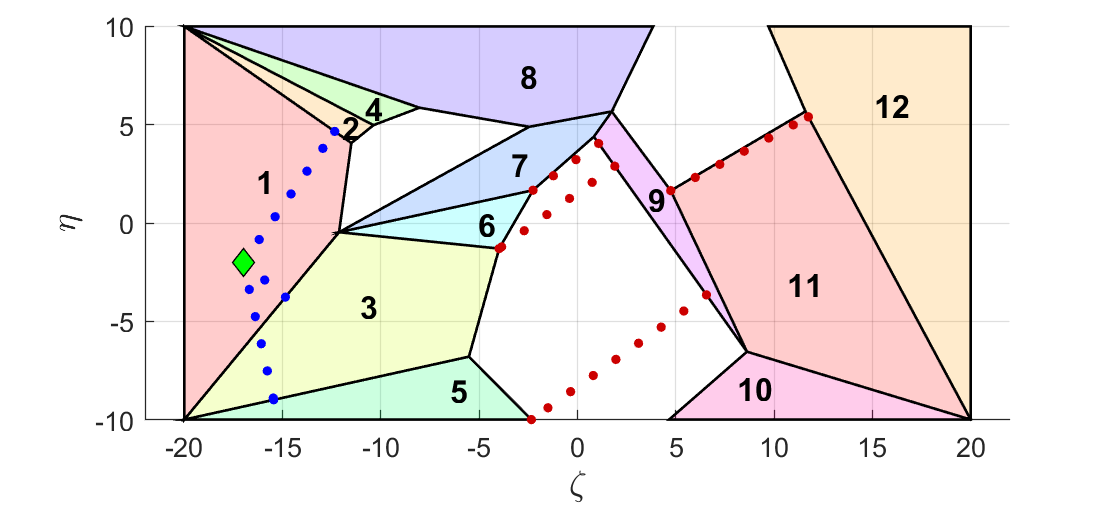}
    \caption{Obstacle avoidance map with convex regions displayed around white obstacles. The current AV position is displayed using a green diamond and the blue dots show point-to-region reachability for regions 2, 3, and 5. The red dots show region-to-region reachability for regions 3, 5, 6, and 12 from region 9. Referencing Definition~\ref{def:reachability}, the red and blue dots are spaced at an interval of $d_{max}$ for time steps less than $k_n$. 
    Given an assumed direction of travel, the plotted spacing between the final two dots is reduced as needed to prevent the final dot from jumping a region or entering its interior.}
    \label{fig:constraint_region_reachability}
\end{figure}

\begin{algorithm}[t]
\smallskip
Input: input ``region" or ``point" \\
\hphantom{Input:} number of time steps $k_n$ \\
\hphantom{Input:} max distance AV can travel in a time step $d_{max}$ \\
Result: reachable region indices $\mathbf{r}$
\begin{algorithmic}[1]
\Function{reachable}{input, $k_n$, $d_{max}$}
\State $\mathbf{r} \gets []$ 
\For{$r_i$ \textbf{in} $\{1, ..., n_F\}$}
    \Switch{input}
    \Case{region}
         \State \parbox[t]{\linewidth}{%
         $\mathbf{y}_0^*, \mathbf{y}_i^* \gets \argmin_{\mathbf{y}_0, \, \mathbf{y}_i} \left(||\mathbf{y}_0-\mathbf{y}_i||_2^2 \, \right|$ \\
         \hphantom{$\mathbf{y}_0^*, \mathbf{y}_i^* \gets$} $\left. \mathbf{y}_0 \in r, \, \mathbf{y}_i \in r_i \right)$ 
        } \label{algline:qp-region}
    \EndCase
    \Case{point}
        \State $\mathbf{y}_0^* \gets H \mathbf{x}_0$
         \State $\mathbf{y}_i^* \gets \argmin_{\mathbf{y}_i} \left(||\mathbf{y}_0^*-\mathbf{y}_i||_2^2 \, \middle| \mathbf{y}_i \in r_i \right)$ \label{algline:qp-point}
    \EndCase
    \EndSwitch
    \State distance $\gets ||\mathbf{y}_0^*-\mathbf{y}_i^*||_2$
    \If{distance $\leq d_{max} \cdot k_n$}
        \State \textbf{append} $r_i$ \textbf{to} $\mathbf{r}$
    \EndIf
\EndFor
\State \textbf{return} $\mathbf{r}$
\EndFunction
\end{algorithmic}
    \caption{Function to compute reachable regions}
    \label{alg:r2r-p2r-reachability}
\end{algorithm}

\subsubsection{Mixed-Integer Solver Implementation}
The structure of the mixed integer solver is provided in Algorithm~\ref{alg:mi-solver-structure}. For clarity of exposition, this algorithm only shows the case where the solver is configured to use a single thread (i.e., no parallelization). The solver can optionally be configured to use up to a maximum number of threads. In that case, each thread is responsible for solving a node and running the subsequent branch-and-bound logic. Additional solver settings include the absolute and relative convergence tolerances $\epsilon_a$ and $\epsilon_r$, the maximum solution time $t_{max}$, and the maximum number of branch-and-bound iterations $\text{it}_{\text{max}}$. In Algorithm~\ref{alg:mi-solver-structure}, $j_+$ denotes the objective of the incumbent and $j_-$ denotes the objective lower bound. Note that for a given objective function, larger values of $j_-$ will speed up convergence as discussed in Sec.~\ref{sec:conv-relax}.

\begin{algorithm}[t]
    Input: multi-stage MIQP as defined in \eqref{eq:miqp-multistage} \\
    \hphantom{Input:} reachable region lookup tables $R_p(k_n)$ and $R_r(k_n, r)$ \\
    \hphantom{Input:} solver settings $\epsilon_a$, $\epsilon_r$, $\text{it}_{max}$, $t_{max}$ \\
    Result: solution $\mathbf{z}_+$ with objective $j_+$ to MIQP
    \begin{algorithmic}[1]
        \State start timer $t$
        \For{$k \in \{0, ..., N\}$}
            \State root.$\mathbf{r}_k \gets R_p(k)$
        \EndFor
        \State \Call{MakeConsistent}{root}
        \State \Call{ApplyBinvars}{root}
        \State nodes, data.$j_+$ $\gets$ root, $+\infty$
        \State $\text{data}.R_r(k_n, r) \gets R_r(k_n, r)$
        \If{warm start enabled}
            \State \Call{ApplyBinvars}{warm start nodes}
            \State nodes $\gets$ \textbf{push} warm start nodes
        \EndIf
        \State it $\gets 0$
        \Do
            \State node $\gets$ \textbf{pop top}(nodes)
            \State solve(node)
            \State \Call{BranchAndBound}{node}
            \State it $\gets$ it $+ 1$
        \doWhile{\textbf{not} \Call{Converged}{} \textbf{and} nodes $\neq []$ \textbf{and} \Statex \qquad \qquad \, $t < t_{max}$ \textbf{and} it $< \text{it}_{\text{max}}$}
        \State \Return data.$\mathbf{z}_+$, data.$j_+$
        \Function{Converged}{}
            \State $j_- \gets$ min (node.$j$ for node \textbf{in} nodes)
            \State $j_+ \gets \text{data}.j_+$
            \State \Return ($|j_{+}-j_{-}|/|j_{+}| < \epsilon_r$ \textbf{or} $|j_{+}-j_{-}| < \epsilon_a$)
        \EndFunction
    \end{algorithmic}
    \caption{MIQP solver structure}
    \label{alg:mi-solver-structure}
\end{algorithm}

This solver implements a modified best bound node selection strategy using a sorted data structure. Nodes on top of the data structure have priority for evaluation, and a \textbf{push-top} operation is used to give priority to a node independent of its cost function lower bound.
This data structure is denoted as \textbf{nodes}. Similarly, the problem data is denoted as \textbf{data}. Both \textbf{nodes} and \textbf{data} are available to all algorithms and threads.

Warm starting is implemented based on the optimal region vector $\bar{\mathbf{r}}$ from the previous solution and the reachable region lookup table $R_r(k_n, r)$. The warm-started region selections are $\bar{\mathbf{r}}_i^{WS} = \begin{bmatrix} r_1 & \cdots & r_N & r_{i} \end{bmatrix}$ where $r_i \in R_r(1, r_N)$. One warm start node is generated for each $\bar{\mathbf{r}}_i^{WS}$. 

Algorithm~\ref{alg:core_bnb} describes the specific branch-and-bound logic used within this solver. Here, the depth of a node is defined to be the number of time steps of the MPC problem for which the associated binary variables are integer feasible. Equivalently, this is the number of time steps for which the number the number of elements of $\mathbf{r}_k$ is equal to one. 

This branch-and-bound algorithm implements branching heuristics based on the nodal solution $\mathbf{z}$. These heuristics make use of a \textproc{GetFirstConsViolation}(node) function. This function iterates over $k$ from $\{0, ..., N\}$ to check the $\mathbf{x}_k$ trajectory within the nodal solution against a halfspace representation of the constraints given in \eqref{eq:mpc-gen-obs-avoid} (i.e., one H-rep polytope for each convex region). The time step of the first detected constraint violation is given as $k^v$ and the region selections corresponding to $\mathbf{x}_k$ for $k < k^v$ are given as $\bar{\mathbf{r}}^{nv}$. If there are no constraint violations, then the cost of enforcing the region selections $\bar{\mathbf{r}}^{nv}$ is estimated. To do this, the node solution $\mathbf{z}_k$ is modified by applying the region selections $\bar{\mathbf{r}}_k^{nv}$ and the cost \eqref{eq:miqp-multistage-cost} is computed. If this estimated cost is lower than that of the incumbent, then a node is created with these region selections and is given priority for evaluation. The function to estimate the cost of enforcing $\bar{\mathbf{r}}^{nv}$ is denoted as \textproc{ApproxCost}($\mathbf{z}$, $\bar{\mathbf{r}}$).

\begin{algorithm}[t]
Input: solved node \\
Result: updated nodes data structure and incumbent solution
\begin{algorithmic}[1]
\Procedure{BranchAndBound}{node}
    \If{node \textbf{is not} feasible \textbf{or} $\text{node}.j > \text{data}.j_+$}
        \State \Return
    \ElsIf{ (depth(node) $= N+1$ \textbf{and}
        \Statex \qquad \qquad \quad $\text{node}.j < \text{data}.j_+$)
    }
        \State data.$\mathbf{z}_+, \text{data}.j_+ \gets \text{node}.\mathbf{z}, \text{node}.j$ 
        \State \textbf{prune} nodes($\text{node}.j > \text{data}.j_+$) \textbf{from} nodes
        \State \Return
    \Else
        \State $\text{cv}, \bar{\mathbf{r}}^{nv}, k^v \gets$ \Call{GetFirstConsViolation}{node}
        \If{\textbf{not} cv}
            \State $j_a \gets$ \Call{ApproxCost}{$\text{node}.\mathbf{z}, \bar{\mathbf{r}}^{nv}$}
            \If{$j_a < \text{data}.j_+$}
                \State node.$\bar{\mathbf{r}}\gets \bar{\mathbf{r}}^{nv}$
                \State nodes $\gets$ \textbf{push top}(node)
            \EndIf
        \EndIf
        \If{cv}
            \State \Call{BranchAtK}{node, $k^v$}
        \Else
            \State \Call{BranchMF}{node}
        \EndIf
    \EndIf
\EndProcedure
\end{algorithmic}
    \caption{Core branch and bound logic}
    \label{alg:core_bnb}
\end{algorithm}

Different branching strategies are used depending on whether \textproc{GetFirstConsViolation}(node) detects a constraint violation. For the case that a constraint violation is detected, Algorithm~\ref{alg:branch_at_timestep} is used. This branching strategy is well suited to obstacle avoidance problems as discussed in \cite{robbins2024efficient} and is similar to an algorithm presented in \cite{fukushima2013model}. At the specified time step, Algorithm~\ref{alg:branch_at_timestep} branches based on the binary variable with the highest fractional value in the relaxed nodal solution. For the case where no constraint violation is detected, Algorithm~\ref{alg:branch_most_frac} is used. This is a variant of the widely used most-fractional branching rule \cite{karamanov2006branch, stellato2018embedded, breu2009branch}. Algorithm~\ref{alg:branch_most_frac} selects a branching time step based on how far the fractional solution is from being integer-valued at that time step. This more general branching logic is used to account for cases where the obstacle avoidance-based branching logic (Algorithm~\ref{alg:branch_at_timestep}) does not apply, such as when using region-dependent costs.

\begin{algorithm}[t]
Input: solved node \\
\hphantom{Input:} branching time step $k^v$ \\
Result: updated nodes data structure
\begin{algorithmic}[1]
    \Procedure{BranchAtK}{node, $k^v$}
        \State $k \gets k^v$
        \State $\text{node}_b, \text{node}_r \gets \text{node}$
        \State $r_{\text{max}} \gets \argmax_r \left(\text{node}.\bm{\xi}_{b k}[r] \right)$ \label{algline:branch_at_timestep_argmax_b}
        \State $\text{node}_b.\mathbf{r}_{k} \gets r_{\text{max}}$
        \State $\text{node}_r.\mathbf{r}_{k} \gets \text{node}.\mathbf{r}_k \setminus r_{\text{max}}$
        \State \Call{MakeConsistent}{$\{\text{node}_b, \text{node}_r\}$}
        \State \Call{ApplyBinvars}{$\{\text{node}_b, \text{node}_r\}$}
        \State nodes $\gets$ \textbf{push} $\{ \text{node}_b, \text{node}_r \}$ 
    \EndProcedure
\end{algorithmic}
    \caption{Branch based on fractional solution at specified time step}
    \label{alg:branch_at_timestep}
\end{algorithm}

\begin{algorithm}[t]
Input: solved node \\
Result: updated nodes data structure
\begin{algorithmic}[1]
    \Procedure{BranchMF}{node}
        \State $\bm{\xi}_{bk}^f \gets |\text{node}.\bm{\xi}_{bk} - \text{round}(\text{node}.\bm{\xi}_{bk})|$
        \State $k^b \gets \argmax_k \left[ \max_r \left( \bm{\xi}_{bk}^f[r] \right) \right]$ 
        \State \Call{BranchAtK}{node, $k^b$}
    \EndProcedure
\end{algorithmic}
    \caption{Branch at timestep where solution is most fractional}
    \label{alg:branch_most_frac}
\end{algorithm}

Algorithm~\ref{alg:make_reach_consistent} enforces consistency of the reachability constraints to set additional binary variables to zero where possible. This algorithm leverages the property that the QP relaxation of a node is its convex hull (see Sec.~\ref{sec:conv-relax}) to generate tighter sub-problems, reducing the number of branch-and-bound iterations required to converge. 
Operators $\&$ and $|$ denote the element-wise \textbf{and} and \textbf{or} operations, respectively. The \textproc{List2Array}($\mathbf{r}$) and \textproc{Array2List}($\mathbf{a}$) functions convert between a vector of region indices $\mathbf{r}$ and a fixed-length array of boolean variables $\mathbf{a}$ for which $a_i = \text{True}$ if $i \in \mathbf{r}$ and $a_i = \text{False}$ otherwise.

\begin{algorithm}[t]
Input: unsolved node \\
Result: node with updated region indices $\bar{\mathbf{r}}$
\begin{algorithmic}[1]
    \Procedure{MakeConsistent}{node}
        \For{$k$ \textbf{in} $\{0, ..., N\}$}
            \State $\mathbf{a}_k \gets$ \Call{list2array}{$\mathbf{r}_k$}
            \For{$k_o$ \textbf{in} $\{0, ..., N\}$}
                \State $\mathbf{a}_r \gets \mathbf{0}^{1 \times n_b}$
                \For{$r_{ko,i}$ \textbf{in} $\mathbf{r_{ko}}$}
                    \State $\mathbf{r}_r \gets \text{data}.R_r(|k-k_o|, r_{ko,i})$
                    \State $\mathbf{a}_r \gets \mathbf{a}_r$ $|$ \Call{list2array}{$\mathbf{r}_r$}
                \EndFor
                \State $\mathbf{a}_k \gets \mathbf{a}_k$ $\&$ $\mathbf{a}_r$
            \EndFor        
            \State $\mathbf{r}_k \gets$ \Call{array2list}{$\mathbf{a}_k$}
        \EndFor
    \EndProcedure
\end{algorithmic}
    \caption{Make reachability constraints consistent}
    \label{alg:make_reach_consistent}
\end{algorithm}

Given a node with region indices $\bar{\mathbf{r}}$, the corresponding QP sub-problem is generated by updating the multi-stage MIQP problem definition in \eqref{eq:miqp-multistage}. The function which does this is denoted as \textproc{ApplyBinvars}(node) in Algorithm~\ref{alg:mi-solver-structure}. In this function, variables that must be zero given $\bar{\mathbf{r}}$ are identified. Zeroed binary variables $\bm{\xi}_{bk}$ are readily identified from $\bar{\mathbf{r}}$ given its definition. The mixed-integer solver additionally checks for constraints of the form 
\begin{equation} \label{eq:remove-cont-vars}
\sum_{i \in \mathcal{I}} \gamma_i z_i = 0 \;,\; \gamma_i > 0 \;,\; z_i \geq 0 \;.
\end{equation}
Eq.~\ref{eq:remove-cont-vars} implies $z_i = 0 \; \forall i \in \mathcal{I}$. Constraints of this form arise when binary factors are eliminated from hybrid zonotopes generated using \cite[thm 5]{siefert2023reachability}. Columns and rows of $P_k$, $\mathbf{q}_k$, $C_k$, $D_k$, $\mathbf{c}_k$, $G_k$, and $\mathbf{w}_k$ corresponding to the zeroed variables are removed. Any constraints that depend only on the zeroed variables are also removed. The remaining binary variables are then relaxed to continuous variables to generate the QP sub-problem. In combination with the reachability constraints, this method of generating QP sub-problems improves the scalability of the MIQP solver with respect to map size by removing variables corresponding to portions of the map that cannot be reached. 

QP sub-problems generated in this way have output constraints of the form $\mathbf{y}_k \in \mathcal{F}^{CR}_k$ where $\mathcal{F}^{CR}_k = \mathit{CH}(\bigcup_{r \in \mathbf{r}_k} \mathcal{F}_r)$. In other words, the QP sub-problems are tight with respect to the selected set of convex regions $\mathbf{r}_k$. To see this, consider an example with two convex regions given in V-rep as $\mathcal{F}_1 = \{\mathbf{v}_1, \mathbf{v}_2\}$ and $\mathcal{F}_2 = \{\mathbf{v}_2, \mathbf{v}_3\}$. From \eqref{eq:vrep2hybzono_sets}, the hybrid zonotope representing this union is
\begin{multline} \label{eq:hybzono-2-line-segments}
    \mathcal{F} = \left\langle \begin{bmatrix}
        \mathbf{v}_1 & \mathbf{v}_2 & \mathbf{v}_3 & \mathbf{0} & \mathbf{0} & \mathbf{0}
    \end{bmatrix}, \begin{bmatrix}
        \mathbf{0} & \mathbf{0}
    \end{bmatrix}, \mathbf{0}, \right. \\ 
    \left.
    \begin{bmatrix}
        1 & 1 & 1 & 0 & 0 & 0 \\
        0 & 0 & 0 & 0 & 0 & 0 \\
        1 & 0 & 0 & 1 & 0 & 0 \\
        0 & 1 & 0 & 0 & 2 & 0 \\
        0 & 0 & 1 & 0 & 0 & 1
    \end{bmatrix}, \begin{bmatrix}
        0 & 0 \\
        1 & 1 \\
        -1 & 0 \\
        -1 & -1 \\
        0 & -1
    \end{bmatrix}, \begin{bmatrix}
        1 \\ 1 \\ 0 \\ 0 \\ 0
    \end{bmatrix}  \right\rangle \;.
\end{multline}
Referencing \eqref{eq:C_k-hybzono}, the hybrid zonotope equality constraints $A_c^{hz} \bm{\xi}_{ck} + A_b^{hz} \bm{\xi}_{bk} = \mathbf{b}^{hz}$ appear directly in the multi-stage MIQP formulation with $\bm{\xi}_{ck}$ and $\bm{\xi}_{bk}$ as optimization variables. If $\mathbf{r}_k = 1$, this implies that $\xi_{bk,2}=0$. Applying \eqref{eq:remove-cont-vars} to \eqref{eq:C_k-hybzono} at time step $k$ identifies $\xi_{ck,3}$ and $\xi_{ck,6}$ as continuous variables that must be zero. Updating the MIQP definition in \eqref{eq:miqp-multistage} to account for the zeroed variables as described above results in $C_k$, $\mathbf{c}_k$, $G_k$, and $\mathbf{w}_k$ that encode the constraint $\mathbf{y} \in \mathcal{F}_{\mathbf{r} k}$ with 
\begin{multline} \label{eq:hybzono-2-line-segments-reduced}
    \mathcal{F}_{\mathbf{r} k} = \left\langle \begin{bmatrix}
        \mathbf{v}_1 & \mathbf{v}_2 & \mathbf{0} & \mathbf{0}
    \end{bmatrix}, \begin{bmatrix}
        \mathbf{0}
    \end{bmatrix}, \mathbf{0}, \right. \\ 
    \left.
    \begin{bmatrix}
        1 & 1 & 0 & 0 \\
        0 & 0 & 0 & 0 \\
        1 & 0 & 1 & 0 \\
        0 & 1 & 0 & 2 \\
    \end{bmatrix}, \begin{bmatrix}
        0 \\
        1 \\
        -1 \\
        -1 \\
    \end{bmatrix}, \begin{bmatrix}
        1 \\ 1 \\ 0 \\ 0
    \end{bmatrix}  \right\rangle \;.
\end{multline}
The hybrid zonotope $\mathcal{F}_{\mathbf{r} k}$ can be seen to be of the form given in \eqref{eq:Z-vrep-gen} with vertex and incidence matrices $V$ and $M$ defined for the convex regions $\mathcal{F}_r$,  $r \in \mathbf{r}_k$. As such, Theorem~\ref{thm:conv-relax-vrep} applies and $\mathit{CR}(\mathcal{F}_{\mathbf{r} k}) = \mathit{CH}(\mathcal{F}_{\mathbf{r} k})$ with $\mathcal{F}_{\mathbf{r} k} = \bigcup_{r \in \mathbf{r}_k} \mathcal{F}_r$.

The same result can be shown to hold for hybrid zonotopes used to represent OGMs. For example, consider an OGM with three cells defined by the hybrid zonotope 
\begin{equation}
    \mathcal{F} = \left\langle \begin{bmatrix}
        d_x & 0 \\ 0 & d_y
    \end{bmatrix}, \begin{bmatrix}
        \mathbf{c}_{b1} & \mathbf{c}_{b2} & \mathbf{c}_{b3}
    \end{bmatrix}, \begin{bmatrix}
        -d_x/2 \\ -d_y/2
    \end{bmatrix}, 
    \mathbf{0}^T, \mathbf{1}^T, 1 \right\rangle \;.
\end{equation}
If $\mathbf{r}_k = \{1, 2\}$, then $\xi_{bk,3} = 0$ and the resulting hybrid zonotope defined within $C_k$, $\mathbf{c}_k$, $G_k$, and $\mathbf{w}_k$ will be
\begin{equation}
    \mathcal{F}_{\mathbf{r} k} = \left\langle \begin{bmatrix}
        d_x & 0 \\ 0 & d_y
    \end{bmatrix}, \begin{bmatrix}
        \mathbf{c}_{b1} & \mathbf{c}_{b2}
    \end{bmatrix}, \begin{bmatrix}
        -d_x/2 \\ -d_y/2
    \end{bmatrix}, 
    \mathbf{0}^T, \mathbf{1}^T, 1 \right\rangle \;.
\end{equation}
Theorem~\ref{thm:conv-relax-occ-grid} applies, so $\mathit{CR}(\mathcal{F}_{\mathbf{r} k}) = \mathit{CH}(\mathcal{F}_{\mathbf{r} k})$ with $\mathcal{F}_{\mathbf{r} k} = \bigcup_{r \in \mathbf{r}_k} \mathcal{F}_r$.

\subsection{Quadratic Program Solver} \label{sec:qp_solver}

A custom implementation of Mehrotra's primal-dual interior point method \cite{mehrotra1992implementation, borrelli2017predictive} is used to solve the QP sub-problems generated by the mixed-integer solver (Sec.~\ref{sec:mi_solver}).

As described in \cite{wang2009fast} and \cite{domahidi2012efficient}, MPC QPs have a multi-stage structure that can be exploited in interior point solvers. Interior point methods solve a large linear system to compute the Newton step at each iteration. This is typically the costliest operation in the solver. Multi-stage algorithms break this linear system into a series of smaller linear systems that can be solved much more efficiently. The proposed solver implements the Newton step solution algorithm provided in \cite{domahidi2012efficient}. As pointed out in \cite{robbins2024efficient}, the Newton step solution strategy provided in \cite{domahidi2012efficient} can be accelerated by using constrained zonotopes to represent any polytopic constraints. 

\subsubsection{Efficient Solution of Multi-Stage QPs}
Here, the procedure to solve for the Newton step, following \cite{domahidi2012efficient}, is described. For conciseness, Mehrotra's algorithm is not presented here, and the interested reader is directed to \cite{borrelli2017predictive}.

Consider a QP relaxation of \eqref{eq:miqp-multistage} with $\bm{\xi}_{bk}$ relaxed from $\integer^{n_b} \rightarrow \real^{n_b}$. In primal-dual interior point methods, the Newton step is given by the solution to the linear system
\begin{equation} 
\begin{bmatrix}
    P & C^T & G^T & 0 \\
    C & 0 & 0 & 0 \\
    G & 0 & 0 & I \\
    0 & 0 & S & \Lambda
\end{bmatrix}
\begin{bmatrix} \Delta \mathbf{z} \\ \Delta \bm{\nu} \\ \Delta \bm{\lambda} \\ \Delta \mathbf{s} \end{bmatrix} = -
\begin{bmatrix} \mathbf{r}_C \\ \mathbf{r}_E \\ \mathbf{r}_I \\ \mathbf{r}_S \end{bmatrix} \;,\label{eq:Newton_step_sparse}
\end{equation} where
\begin{subequations} \label{eq:qp-matrix-defs}
\begin{align} 
P &= \text{blkdiag} \left( \begin{bmatrix} P_0 & \cdots & P_{N} \end{bmatrix} \right) \, , \label{eq:hessian_matrix} \\   
C &= \begin{bmatrix}
C_0 & D_1 & \cdots & 0 \\
0 & \ddots & \ddots & 0 \\
0 & \cdots & C_{N-1} & D_{N}
\end{bmatrix} \, , \label{eq:eq_cons_matrix} \\
G &= \text{blkdiag} \left( \begin{bmatrix} G_0 & \cdots & G_{N} \end{bmatrix} \right) \, , \label{eq:ineq_cons_matrix} \\   
\mathbf{r}_C &= P \mathbf{z} + \mathbf{q} + C^T \bm{\nu} + G^T \bm{\lambda} \, , \\
\mathbf{r}_E &= C \mathbf{z} + \mathbf{c} \, , \\
\mathbf{r}_I &= (G \mathbf{z} - \mathbf{w}) + \mathbf{s} \, , \\
\mathbf{r}_S &= S \bm{\lambda} \, .
\end{align}
\end{subequations}
Here, $\mathbf{z}$ are the primal variables, $\bm{\nu}$ and $\bm{\lambda}$ are the dual variables for the equality  and inequality constraints, respectively, and $\mathbf{s}$ are the inequality constraint slack variables. $S$ and $\Lambda$ are diagonal matrices constructed from $\mathbf{s}$ and $\bm{\lambda}$.

Solving \eqref{eq:Newton_step_sparse} directly generally requires an LU decomposition, which has a cost of $(2/3) n^3$ flops where $n$ is the dimension of the linear system matrix. Although sparse linear algebra techniques can be used to reduce the computational cost, solving \eqref{eq:Newton_step_sparse} remains the primary computational challenge in primal-dual interior point methods.

For multi-stage problems, \eqref{eq:Newton_step_sparse} can be written concisely as
\begin{equation} \label{eq:Newton_step_concise}
    Y \Delta \bm{\nu} = \bm{\beta} \:,
\end{equation}
with
\begin{subequations} \label{eq:Y_beta}
\begin{align}
&Y = C \Phi^{-1} C^T \in \mathcal{S}_{>0} \;, \\
&\bm{\beta} = \mathbf{r}_E - C \Phi^{-1} \left( \mathbf{r}_C + G^T S^{-1} \Lambda \mathbf{r}_I - G^T S^{-1} \mathbf{r}_S \right) \;,
\end{align}
\end{subequations}
and $\mathcal{S}_{>0}$ denoting the set of positive definite matrices. $Y$ has block-banded structure given by
\begin{subequations} \label{eq:Y_kk}
\begin{align}
Y_{kk} =& C_{k-1} \Phi_{k-1}^{-1} C_{k-1}^T + D_{k} \Phi_{k}^{-1} D_{k}^T \nonumber \\
& \forall k \in \{1, \cdots, N\} \;, \\
Y_{k, k+1} =& D_k \Phi_k^{-1} C_k^T \; \forall k \in \{1, \cdots, N-1\} \;, \\
Y_{k+1, k} =& Y_{k, k+1}^T,\; Y_{kj} = 0 \; \forall j \notin \{k-1, k, k+1\} \;.
\end{align}
\end{subequations}
The matrix $\Phi = \text{blkdiag} \left( \begin{bmatrix} \Phi_0 & \cdots & \Phi_{N} \end{bmatrix} \right)$ has blocks
\begin{equation} \label{eq:Phi_k}
\Phi_k = P_k + G_k^T S_k^{-1} \Lambda_k G_k \;.
\end{equation}
Matrices $S_k$ and $\Lambda_k$ correspond to the inequality constraints at time step $k$.

Matrix $\Phi_k$ must be invertible to evaluate \eqref{eq:Y_beta} and \eqref{eq:Y_kk}. The $P_k$ matrix is singular (though positive semi-definite) as defined in \eqref{eq:cost_fcn_Pk}, so invertibility must be ensured by the $G_k^T S_k^{-1} \Lambda_k G_k$ term. Proposition~\ref{prop:Phi_k_pos_def} shows that this is achieved by requiring $G_k$ to be full column rank, which is indeed the case based on its definition in \eqref{eq:ineq_cons_Gk}.

\begin{proposition} \label{prop:Phi_k_pos_def}
If $G_k$ is full column rank and $P_k$ is positive semi-definite, then $\Phi_k$ is positive definite and therefore invertible.    
\begin{proof}
$S_k^{-1} \Lambda_k$ is positive definite because it is a diagonal matrix with all elements $(S_k^{-1} \Lambda_k)_{ii} > 0$. If $G_k$ is full column rank, then $G_k^T S_k^{-1} \Lambda_k G_k$ is positive definite using the property that a $r \times r$ matrix $B A B^T$ is positive definite for a $n \times n$ matrix $A \in \mathcal{S}_{>0}$ and a $r \times n$ matrix $B$ with $\text{rank}(B)=r$ \cite{petersen2008matrix}. $\Phi_k$ is positive definite because the sum of a positive semi-definite matrix and a positive definite matrix is positive definite. 
\end{proof}
\end{proposition}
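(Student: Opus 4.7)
The plan is to prove positive definiteness of $\Phi_k$ by showing it is the sum of a positive semi-definite matrix ($P_k$, by hypothesis) and a positive definite matrix ($G_k^T S_k^{-1} \Lambda_k G_k$), then conclude invertibility from positive definiteness.

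First, I would observe that $S_k$ and $\Lambda_k$ are diagonal matrices whose entries are the primal slack variables and inequality-constraint dual variables of the interior point iterate. In a primal-dual interior point method, these are maintained strictly positive throughout the iterations, so every diagonal entry of $S_k^{-1} \Lambda_k$ is strictly positive. Hence $S_k^{-1} \Lambda_k \in \mathcal{S}_{>0}$.

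Next, I would establish positive definiteness of $G_k^T S_k^{-1} \Lambda_k G_k$ via the quadratic form characterization. For any nonzero $\mathbf{v}$, the full column rank hypothesis on $G_k$ gives $G_k \mathbf{v} \neq \mathbf{0}$, and since $S_k^{-1} \Lambda_k$ is positive definite,
\begin{equation}
\mathbf{v}^T G_k^T S_k^{-1} \Lambda_k G_k \mathbf{v} = (G_k \mathbf{v})^T S_k^{-1} \Lambda_k (G_k \mathbf{v}) > 0 \;.
\end{equation}
(Equivalently one can cite the standard fact that $B A B^T \in \mathcal{S}_{>0}$ whenever $A \in \mathcal{S}_{>0}$ and $B$ has full row rank, applied to $B = G_k^T$, as the paper already references.)

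Finally, for any nonzero $\mathbf{v}$, $\mathbf{v}^T \Phi_k \mathbf{v} = \mathbf{v}^T P_k \mathbf{v} + \mathbf{v}^T G_k^T S_k^{-1} \Lambda_k G_k \mathbf{v} \geq 0 + \text{(strictly positive)} > 0$, so $\Phi_k \in \mathcal{S}_{>0}$, which implies $\Phi_k$ is invertible. There is no real obstacle here; this is essentially a textbook linear-algebra argument, and the only subtlety worth stating explicitly is why $S_k^{-1}\Lambda_k$ has strictly positive diagonal entries, which relies on the interior point iterate assumption rather than on anything proved earlier in the paper.
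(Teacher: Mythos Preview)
Your proposal is correct and follows essentially the same approach as the paper: both argue that $S_k^{-1}\Lambda_k$ is diagonal with strictly positive entries, that $G_k^T S_k^{-1}\Lambda_k G_k$ is positive definite because $G_k$ has full column rank, and that adding the positive semi-definite $P_k$ preserves positive definiteness. The only cosmetic difference is that you write out the quadratic-form argument explicitly whereas the paper cites the corresponding matrix identity from \cite{petersen2008matrix}.
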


To construct the $Y_{kj}$ matrices, it is necessary to first compute Cholesky factorizations of the $N+1$ $\Phi_k$ matrices. The $Y_{kj}$ can then be constructed by substitution. Each $\Phi_k$ factorization has a cost of $(1/3) n_{in\,k}^3$ flops where $n_{in\,k}$ is the number of inequality constraints for the $k^{th}$ MPC time step.

Once the $Y$ matrix has been constructed, it is factorized to solve \eqref{eq:Newton_step_concise}. This requires $N$ Cholesky factorizations at a cost of $(1/3) n_{eq\,k}^3$ flops where $n_{eq\,k}$ is the number of equality constraints in the $C_k$ matrix.

\subsubsection{Efficient QP Solution with Constrained Zonotope Constraints} \label{sec:qp-cons-zono}
When using hybrid zonotopes to represent the obstacle-free space, QP relaxations of \eqref{eq:miqp-multistage} have constrained zonotope constraints as defined in \eqref{eq:cons_zonotope}.

Constraints of the form $\mathbf{y}_{k} \in \mathcal{Z}_C$ with $\mathcal{Z_C}$ a constrained zonotope can be exploited when computing the Cholesky factorizations of the $\Phi_k$ matrices during the Newton step computation. As stated in \cite{domahidi2012efficient}, the $\Phi_k$ matrices are diagonal for the case of diagonal quadratic cost $P_k$ and box constraints $G_k$.
$P_k$ is diagonal by Assumption~\ref{as:Pk-diag}. $G_k$ is in box constraint form in our problem formulation as discussed in Sec.~\ref{sec:baseline_prob_form} and shown in \eqref{eq:ineq_cons_Gk}. In the QP solver implemented here, a low-cost diagonal matrix inversion of $\Phi_k$ is computed when needed rather than computing the Cholesky factors of $\Phi_k$.

The computational penalty in the QP solver for using constrained zonotopes is the inclusion of additional equality constraints and primal optimization variables in \eqref{eq:miqp-multistage} as described in Sec.~\ref{sec:miqp}. The equality constraints will increase the computational cost of factorizing the $Y_{kk}$ matrices. The $\bm{\xi}_{ck}$ and relaxed $\bm{\xi}_{bk}$ will increase the overall size of the linear system in \eqref{eq:Newton_step_sparse}. However, in the multi-stage formulation, the operations with the worst scaling (Cholesky factorizations at $(1/3)n^3$ flops) are in terms of $n_{in\,k}$ and $n_{eq\,k}$. As such, the addition of the $\bm{\xi}_{ck}$ and  $\bm{\xi}_{bk}$ variables does not significantly affect the QP solution time for large problem sizes. 
\section{Results} \label{sec:results}

\subsection{Desktop Computer Simulation} \label{sec:results-sim}
This section gives simulation results for a discrete-time double integrator dynamics model with an MPC controller. Double integrator models are frequently used for AV motion planning \cite{matni2024quantitative, quirynen2023real, whitaker2021optimal, gratzer2024two, agrawal2021constructive}. Two MIQP solvers were used: The solver proposed in this article and the state-of-the-art commercial solver Gurobi \cite{gurobi}. Furthermore, two MPC formulations were used: One using a hybrid zonotope to represent the obstacle-free space $\mathcal{F}$, and one using a union of H-rep polytopes. Hyperplane arrangements were not considered for a free space partition given their similarity in structure to unions of H-rep polytopes. In \cite{bird2021unions}, hybrid zonotope obstacle-free space descriptions were found to result in reduced MPC solution times when compared to hyperplane arrangements when both formulations were solved with Gurobi. 

The proposed solver was implemented in C++ and depends on the C++ standard library and the Eigen linear algebra library \cite{eigenweb}. Gurobi was called via its C++ API. Inheritance was used to ensure that the same MIQPs were passed to both the proposed solver and Gurobi (i.e., the same code was used to build \eqref{eq:miqp-multistage} in either case). H-rep formulations were constructed as described in \cite{robbins2024efficient}.

An MPC prediction horizon of $N=15$ steps was used. Both the proposed MIQP solver and Gurobi were configured to use relative and absolute convergence tolerances of $\epsilon_a = 0.1$ and $\epsilon_r = 0.01$ respectively (see Algorithm~\ref{alg:mi-solver-structure}). 

\begin{figure*}[t!]
\centering
\begin{minipage}{0.6\textwidth}
\includegraphics[width=\textwidth]{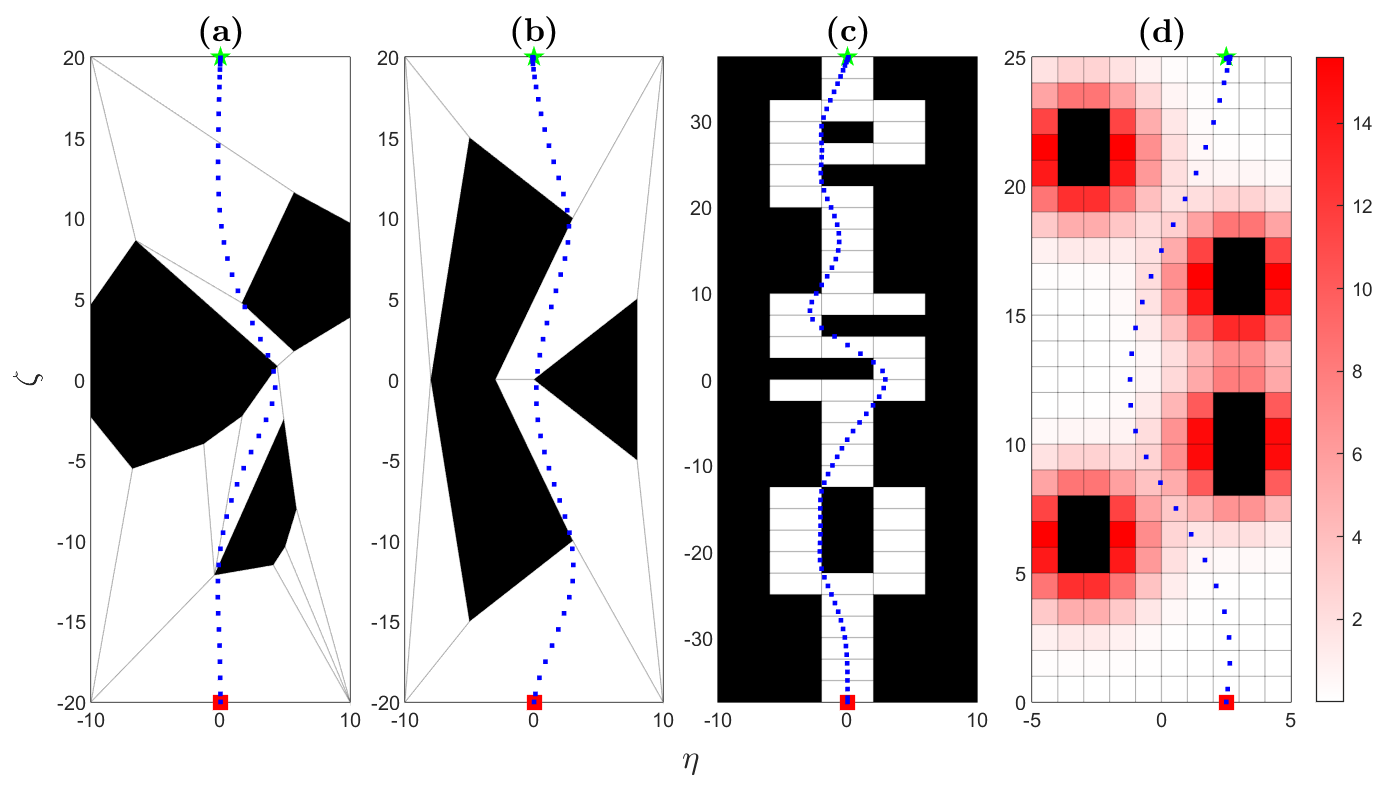}
\end{minipage}
\begin{minipage}{0.3\textwidth}
\includegraphics[width=\textwidth]{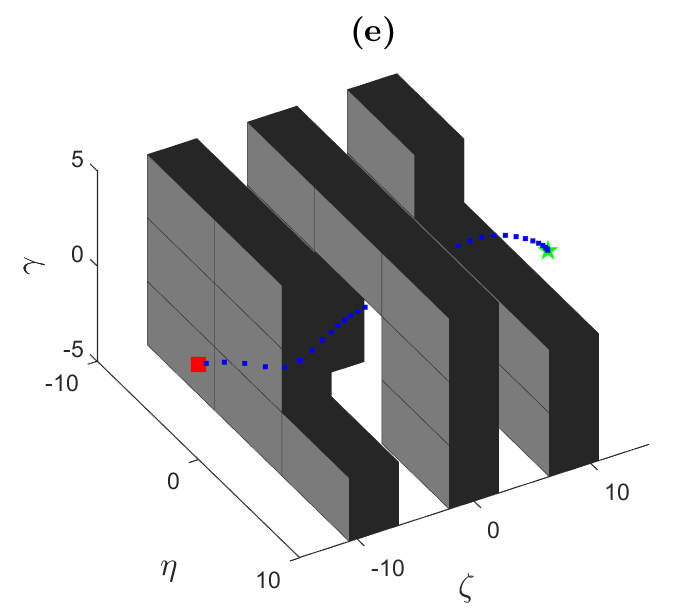}
\end{minipage}
\caption{Simulation results for double integrator model using: \textbf{(a)} a polytopic map with convex obstacles, \textbf{(b)} a polytopic map with non-convex obstacles, \textbf{(c)} a binary OGM, \textbf{(d)} an OGM with both obstacles and cell-dependent costs, and \textbf{(e)} a binary OGM with three spatial dimensions.} 
\label{fig:trajectories}
\end{figure*}

Two-dimensional and three-dimensional discrete-time double integrator dynamics models were considered. For the two-dimensional case, the linear dynamics matrices are given by
\begin{equation} \label{eq:dbl_int_A_B}
A = \begin{bmatrix}
    1 & \Delta t & 0 & 0 \\
    0 & 1 & 0 & 0 \\
    0 & 0 & 1 & \Delta t \\
    0 & 0 & 0 & 1
\end{bmatrix},\; 
B = \begin{bmatrix} \frac{1}{2} \Delta t^2 & 0 \\ \Delta t & 0 \\ 0 & \frac{1}{2} \Delta t^2 \\ 0 & \Delta t\end{bmatrix}\;.
\end{equation}
A discrete time step of $\Delta t=1$ was used. The state and input are $\mathbf{x}_k = \begin{bmatrix}
    \zeta_k & \dot{\zeta}_k & \eta_k & \dot{\eta}_k
\end{bmatrix}^T$ and $\mathbf{u}_k = \begin{bmatrix}
    \ddot{\zeta}_k & \ddot{\eta}_k
\end{bmatrix}^T$, respectively, where $\zeta$ and $\eta$ are position coordinates. The state, input, and terminal state constraint sets are defined using constrained zonotopes as described in Sec.~\ref{sec:problem-formulation}. Referencing \eqref{eq:state-input-cons-conzono}, the state and input constraints for the two-dimensional case are given by
\begin{subequations} \label{eq:dbl_int_conzono}
\begin{align}
&\mathcal{Z}_{cx} = \langle v_{\text{max}} I_2, \mathbf{0}_2, [], [] \rangle \;, H_x = \begin{bmatrix}
    0 & 1 & 0 & 0 \\
    0 & 0 & 0 & 1
\end{bmatrix} \;, \\
&\mathcal{Z}_{cu} = \langle a_{\text{max}} I_2, \mathbf{0}_2, [], [] \rangle \;, H_u = I_2 \;, \\
&\mathcal{Z}_{cxT} = \langle 0_{2\times2}, \mathbf{0}_2, [], [] \rangle \;, H_{xT} = \begin{bmatrix}
    0 & 1 & 0 & 0 \\
    0 & 0 & 0 & 1
\end{bmatrix} \;,
\end{align}
\end{subequations}
where $v_{max}=1$ and $a_{max}=1$ are velocity and acceleration limits, respectively. The terminal constraint requires the vehicle to be at rest. All box constraints in \eqref{eq:state-input-cons-conzono} were set to $\pm 1e4 \cdot \mathbf{1}$. Constraint softening was used for the state and terminal state constraint sets with $W_{cxk} = W_{cxN} = 1e6 \cdot I_2$ and $\bm{\sigma}_{cx k}^u = \bm{\sigma}_{cx N}^u = 1e4 \cdot \mathbf{1}$. The obstacle avoidance constraints were additionally subject to softening with $W_k = 1e6 \cdot I$ and $\bm{\sigma}^u = 1e4 \cdot \mathbf{1}$ with dimensions dependent on the problem and constraint formulation. The MPC cost function matrices are given by
\begin{subequations} \label{eq:dbl_int_Q_R}
\begin{align}
&Q_k = \text{diag}(\begin{bmatrix}
    0.1 & 0 & 0.1 & 0
\end{bmatrix})\;, \\
&R_k = 10 \cdot I_2\;, \\
&Q_N = \text{diag}(\begin{bmatrix}
    10 & 0 & 10 & 0
\end{bmatrix}) \;.
\end{align}
\end{subequations}

The three-dimensional double integrator dynamics model was created using straightforward extensions of \eqref{eq:dbl_int_A_B}, \eqref{eq:dbl_int_conzono}, and \eqref{eq:dbl_int_Q_R}.

Simulated trajectories for the double integrator model are given in Fig.~\ref{fig:trajectories}. These trajectories were generated using the proposed solver with hybrid zonotope constraints. The path taken by the model is given by blue dots. The starting position is the red square while the reference position is the green star. The black regions are obstacles. The free space is partitioned into convex sub-regions, whose boundaries are displayed as gray lines. The free space partition is not displayed for map \textbf{(e)} but can be deduced from the OGM formulation. In map \textbf{(d)}, the costs $\mathbf{q}_k^r$ of different region selections are color-coded using varying shades of red. These trajectories can be observed to pass through corners of obstacles because of the discrete-time formulation. Obstacles may be interpreted as having been bloated to account for vehicle geometry and inter-sample behavior as in \cite{guthrie2022closed} such that this corner-cutting behavior does not result in a collision.

\begin{figure*}[t!]
\centering
\includegraphics[width=\linewidth]{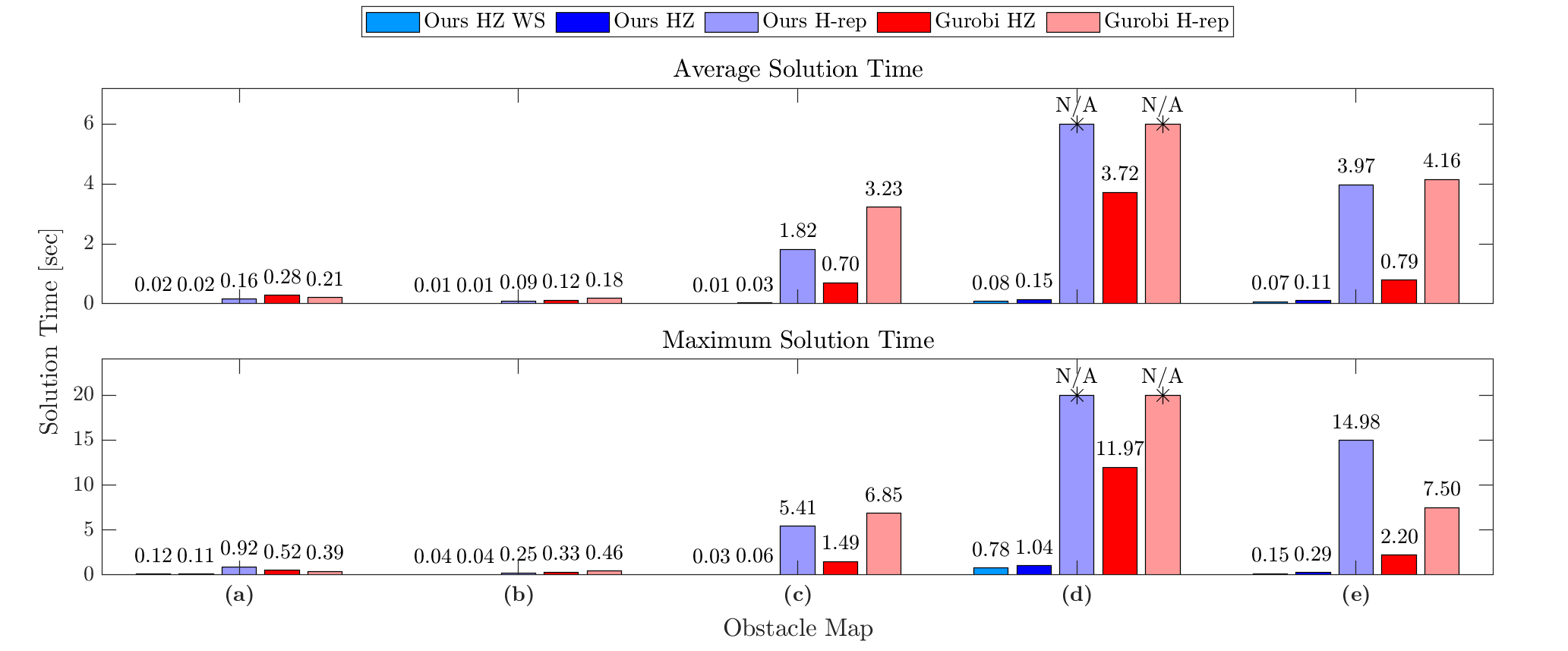}
\caption{Average and maximum solution times for the double integrator example using an MPC horizon of $N=15$. The obstacle maps correspond to those shown in Fig.~\ref{fig:trajectories}. ``HZ" denotes a hybrid zonotope representation of the obstacle free space, while ``H-rep" denotes a union of H-rep polytopes using the Big-M method. ``WS" indicates that the binary variables were warm started.}
\label{fig:sol-times}
\end{figure*}

Fig.~\ref{fig:sol-times} compares solution times of the MPC optimization problem across different solvers (the proposed MIQP solver vs. Gurobi \cite{gurobi}) and different obstacle-free space representations (hybrid zonotopes (HZ) vs. unions of H-rep polytopes using the Big-M method) for the simulated double integrator dynamics described above. Obstacle maps \textbf{(a)}-\textbf{(e)} correspond to those shown in Fig.~\ref{fig:trajectories}.

All results presented in this section (Sec.~\ref{sec:results-sim}) use the following parameters and simulation setup. Both solvers were given a maximum solution time of 60~sec. If they failed to converge within that time on any iteration, then the simulation was terminated, and that case was marked as ``N/A" in the figures and tables. An asterisk was placed on top of bars for cases where the value of the bar does not fit within the figure limits. Simulations were carried out on a desktop computer with an i7-14700 processor and 32GB of RAM running Ubuntu 22.04. Both Gurobi and the proposed MIQP solver were configured to use up to 16 threads. Results were averaged over 5 trials.

Results for the warm start case (labeled as ``Ours HZ WS") do not include the first time step of the simulation. This was done to show the effect of warm starting when compared to the ``Ours HZ" case, as the first time step of the former was not warm started.

In all cases, the proposed MIQP solver using hybrid zonotope constraints was able to converge faster on average and in the worst case than Gurobi using either hybrid zonotope or H-rep constraints. Gurobi was faster using a hybrid zonotope formulation than H-rep for some, but not all, examples. The proposed solver was always faster (up to two orders of magnitude) using hybrid zonotopes due to its structure-exploiting construction.

Table~\ref{tab:bnb-iter} shows the number of branch and bound iterations (i.e., the number of QP sub-problems that were solved) using the proposed MIQP solver for the data shown in Fig.~\ref{fig:sol-times}. For all maps, the solver converged in fewer iterations using hybrid zonotope representations of $\mathcal{F}$ than using halfspace representations. This shows the impact of the tighter hybrid zonotope convex relaxations when compared to relaxations using the Big-M method, as discussed in Sec.~\ref{sec:conv-relax}. 

Table~\ref{tab:qp-sol-time} shows the average QP sub-problem solution times for the data given in Fig.~\ref{fig:sol-times} using the proposed solver. In all cases, QP solution times were reduced using the hybrid zonotope representation of $\mathcal{F}$ when compared to H-rep, again reflecting the structure-exploiting construction of the solver. This improvement was particularly pronounced for OGMs, as the number of equality constraints in the hybrid zonotope representation does not increase with map complexity.

\begin{table}[tb]
    \caption{Comparisons of average and maximum branch-and-bound (b\&b) iterations between constraint representations and using binary variable warm starting.}
 \setlength{\tabcolsep}{4pt}
    \centering
    \begin{tabular}{c | c c c c c | c c c c c }
        \toprule
         &  \multicolumn{5}{c}{\textbf{avg b\&b iter}} & \multicolumn{5}{c}{\textbf{max b\&b iter}}  \\
         \midrule
         \textbf{map} & (a) & (b) & (c) & (d) & (e) & (a) & (b) & (c) & (d) & (e) \\
         \midrule
         HZ WS & 21 & 9 & 12 & 68 & 45 & 161 & 31 & 59 & 1241 & 181 \\
         HZ & 32 & 14 & 62 & 222 & 125 & 199 & 63 & 159 & 2283 & 390 \\
         H-Rep & 81 & 54 & 123 & N/A & 133 & 573 & 191 & 403 & N/A & 554 \\
         \bottomrule
    \end{tabular}
    \label{tab:bnb-iter}
\end{table}

\begin{table}[tb]
    \caption{Comparisons of average QP solution times between constraint representations and using binary variable warm starting.}
 \setlength{\tabcolsep}{4pt}
    \centering
    \begin{tabular}{c | c c c c c }
        \toprule
         &  \multicolumn{5}{c}{\textbf{avg QP time [ms]}} \\
         \midrule
         \textbf{map} & (a) & (b) & (c) & (d) & (e) \\
         \midrule
         HZ WS & 4 & 2 & 1 & 4 & 11 \\
         HZ & 4 & 3 & 3 & 5 & 8 \\
         H-Rep & 19 & 11 & 166 & 6324 & 354 \\
         \bottomrule
    \end{tabular}
    \label{tab:qp-sol-time}
\end{table}

Table~\ref{tab:sol-times-vs-N} shows solution times vs. the MPC prediction horizon $N$. 
Solution times generally increased with increasing $N$, though the rate of increase was often much steeper for H-rep constraint representations and for Gurobi than it was for the proposed MIQP solver using hybrid zonotope constraints. Warm starting the binary variables additionally reduced the sensitivity of the solution time to the prediction horizon.   

\begin{table}[tb]
 \setlength{\tabcolsep}{3pt}
    \centering
    \caption{Average and maximum MPC solution times for the double integrator example as a function of MPC horizon $N$.}
    \begin{tabular}{c|c|c c c c|c c c c}
    \toprule
        & & \multicolumn{4}{c}{\textbf{avg sol time [sec]}} & \multicolumn{4}{c}{\textbf{max sol time [sec]}} \\
        \midrule
        & \textbf{solver and} & \multicolumn{4}{c}{N} & \multicolumn{4}{c}{N} \\ 
        \textbf{map} & \textbf{constraint} & 5 & 10 & 15 & 20 & 5 & 10 & 15 & 20 \\
        \midrule
        \multirow{5}{*}{(a)} & Ours HZ WS & 0.00 & 0.01 & 0.02 & 0.06 & 0.01 & 0.04 & 0.12 & 0.61 \\
        & Ours HZ &  0.00 & 0.01 & 0.02 & 0.08 & 0.01 & 0.04 & 0.11 & 0.69 \\
        & Ours H-rep & 0.01 & 0.05 & 0.16 & 0.49 & 0.05 & 0.26 & 0.92 & 4.13 \\
        & Gurobi HZ & 0.03 & 0.12 & 0.28 & 0.68 & 0.06 & 0.25 & 0.52 & 4.14 \\
        & Gurobi H-rep & 0.02 & 0.08 & 0.21 & 0.42 & 0.04 & 0.14 & 0.39 & 0.80 \\
        \midrule
        \multirow{5}{*}{(b)} & Ours HZ WS & 0.00 & 0.01 & 0.01 & 0.02 & 0.00 & 0.02 & 0.04 & 0.06 \\
        & Ours HZ & 0.00 & 0.01 & 0.01 & 0.03 & 0.01 & 0.02 & 0.04 & 0.08 \\
        & Ours H-rep & 0.01 & 0.04 & 0.09 & 0.19 & 0.02 & 0.11 & 0.25 & 0.49 \\
        & Gurobi HZ & 0.01 & 0.05 & 0.12 & 0.31 & 0.03 & 0.10 & 0.33 & 0.98 \\
        & Gurobi H-rep &  0.02 & 0.07 & 0.18 & 0.40 & 0.03 & 0.19 & 0.46 & 1.62 \\
        \midrule
        \multirow{5}{*}{(c)} & Ours HZ WS & 0.00 & 0.00 & 0.01 & 0.01 & 0.01 & 0.02 & 0.03 & 0.04 \\
        & Ours HZ & 0.00 & 0.01 & 0.03 & 0.05 & 0.01 & 0.03 & 0.06 & 0.10 \\
        & Ours H-rep & 0.18 & 0.76 & 1.82 & 3.41 & 0.65 & 2.55 & 5.41 & 8.93 \\
        & Gurobi HZ &  0.06 & 0.26 & 0.70 & 1.32 & 0.12 & 0.63 & 1.49 & 3.29 \\
        & Gurobi H-rep & 0.26 & 1.09 & 3.23 & 6.81 & 0.32 & 1.78 & 6.85 & 21.15 \\
        \midrule
        \multirow{5}{*}{(d)} & Ours HZ WS & 0.01 & 0.07 & 0.08 & 0.04 & 0.05 & 0.41 & 0.78 & 0.07 \\
        & Ours HZ & 0.01 & 0.11 & 0.15 & 0.12 & 0.04 & 0.95 & 1.04 & 1.13 \\
        & Ours H-rep & 19.21 & N/A & N/A & N/A & N/A & N/A & N/A & N/A \\
        & Gurobi HZ & 0.41 & 1.61 & 3.72 & 6.04 & 0.94 & 4.33 & 11.97 & 33.21 \\
        & Gurobi H-rep & 5.35 & 21.40 & N/A & N/A & 7.30 & 40.63 & N/A & N/A \\
        \midrule
        \multirow{5}{*}{(e)} & Ours HZ WS & 0.03 & 0.06 & 0.07 & 0.10 & 0.04 & 0.17 & 0.15 & 0.38 \\
        & Ours HZ & 0.02 & 0.06 & 0.11 & 0.19 & 0.09 & 0.19 & 0.29 & 0.53 \\
        & Ours H-rep & 0.57 & 2.44 & 3.97 & 7.61 & 2.50 & 14.49 & 14.98 & 34.03 \\
        & Gurobi HZ & 0.08 & 0.33 & 0.79 & 1.35 & 0.12 & 0.62 & 2.20 & 4.41 \\
        & Gurobi H-rep & 0.40 & 1.65 & 4.16 & 23.80 & 0.49 & 3.01 & 7.50 & N/A \\
        \bottomrule
    \end{tabular}
    \label{tab:sol-times-vs-N}
\end{table}

The effects of the reachability calculations and exploiting the box constraint structure of the hybrid zonotope constraints are quantified in Table~\ref{tab:reach-diag}. The baseline case corresponds to the proposed MIQP solver without warm starting for an MPC prediction horizon of $N=15$ steps. The ``no reach" case is constructed by setting $d_{max} = \infty$ (see Definition~\ref{def:reachability}). This makes all regions reachable from each other and from the initial condition, effectively eliminating the effects of the reachability calculations. The ``no diag" case was constructed by forcing the QP solver to compute Cholesky factorizations of the $\Phi_k$ matrices in \eqref{eq:Phi_k}, even though these matrices are diagonal. Both the reachability logic and the box constraint structure exploitation contribute significantly to the overall performance of the MIQP solver.

\begin{table}[tb]
    \caption{Effect of reachability constraints and diagonal matrix exploitation on the overall solution time for an MPC prediction horizon of $N=15$. The ``baseline" case corresponds to the proposed MIQP solver using a hybrid zonotope representation of the obstacle-free space (i.e., ``Ours HZ").}
 \setlength{\tabcolsep}{3pt}
    \centering
    \begin{tabular}{c | c c c c c | c c c c c }
        \toprule
         &  \multicolumn{5}{c}{\textbf{avg sol time [sec]}} & \multicolumn{5}{c}{\textbf{max sol time [sec]}}  \\
         \midrule
         \textbf{map} & (a) & (b) & (c) & (d) & (e) & (a) & (b) & (c) & (d) & (e) \\
         \midrule
         baseline & 0.02 & 0.01 & 0.03 & 0.15 & 0.11 & 0.11 & 0.04 & 0.06 & 1.04 & 0.29 \\
         no reach & 0.08 & 0.02 & 0.26 & 1.69 & 0.19 & 0.63 & 0.10 & 0.70 & 10.09 & 0.48 \\
         no diag & 0.07 & 0.03 & 0.06 & 1.25 & 0.19 & 0.36 & 0.10 & 0.13 & 11.91 & 0.51 \\
         no reach & \multirow{2}{*}{0.25} & \multirow{2}{*}{0.06} & \multirow{2}{*}{0.67} & \multirow{2}{*}{N/A} & \multirow{2}{*}{0.47} & \multirow{2}{*}{1.90} & \multirow{2}{*}{0.26} & \multirow{2}{*}{1.83} & \multirow{2}{*}{N/A} & \multirow{2}{*}{1.24} \\
         or diag & & & & & & & & & & \\
         \bottomrule
    \end{tabular}
    \label{tab:reach-diag}
\end{table}

\subsection{Processor-in-the-Loop Testing} \label{sec:results-pil}
Processor-in-the-loop (PIL) testing was conducted to assess the feasibility of applying the proposed MIQP solver to motion planning problems in a real-time context on embedded hardware. With the exception of the differences noted below, the same maps, models, and parameterizations given in Sec.~\ref{sec:results-sim} are used here. 

The PIL tests were executed using a NVIDIA Jetson AGX Orin embedded computer which has a 2.2~GHz, 12 core CPU. A ROS2 implementation was used for the MPC controller using the proposed MIQP solver. To facilitate real-time implementation, the optimal control input at time-step $k=1$, $\mathbf{u}_1$, was stored after the solver returns and applied at the start of next MPC iteration. The control input at $k=0$ was constrained accordingly as described in Sec.~\ref{sec:problem-formulation}. The MPC horizon was set to $N=15$. The MIQP solver was configured to use up to 6 threads. Warm starting was used for all time steps of the PIL test except for the first step. The MPC loop rate was set to 5~Hz and the dynamics were updated at 100~Hz to approximate continuous time. For parity with the examples from Sec.~\ref{sec:results-sim}, the simulated dynamics were executed $5$ times faster than real-time such that $1$~sec of simulated time passed for each loop of the controller.

The simulated trajectories from the PIL tests are compared to the optimal simulated trajectories from Sec.~\ref{sec:results-sim} in Fig.~\ref{fig:traj-PIL}. Observed discrepancies between the trajectories result from the solver returning before convergence in the PIL tests to meet the required 5~Hz update rate. Discrepancies can also occur when there are multiple feasible motion plans with costs within $\epsilon_a$ or $\epsilon_r$ of each other. In map \textbf{(a)}, the solver returned before convergence in 8 out of 49 iterations with a worst-case guaranteed optimality gap (i.e., $|j_+-j_-|/|j_+|$) of $77.6\%$. In maps \textbf{(b)} and \textbf{(c)}, the solver converged on all iterations. In map \textbf{(d)}, the solver returned a sub-optimal solution on 11 of 39 iterations with a worst-case guaranteed optimality gap of $28.4\%$. In map \textbf{(e)}, 9 of 39 solutions were sub-optimal with a worst-case guaranteed optimality gap of $99.7\%$. 

A metric for the quality of the trajectories in Fig.~\ref{fig:traj-PIL} is the integrated stage costs $J_{st} = \sum_n (\mathbf{x}_n-\mathbf{x}_r)^T Q_k (\mathbf{x}_n-\mathbf{x}_r) + \mathbf{u}_n^T R_k \mathbf{u}_n + q^r(\mathbf{y}_n)$ where $n$ is the simulation time step. These costs are given for the optimal simulated (i.e., converged) and PIL trajectories in Table~\ref{tab:pil_stage_costs}. In all cases, the integrated stage costs for the PIL trajectories are comparable to those of the optimal simulations. 

\begin{table}[tb]
    \caption{Integrated stage costs for the optimal simulated and PIL trajectories.}
    \centering
    \begin{tabular}{c|c|c|c|c|c}
    \toprule
         \textbf{map} & (a) & (b) & (c) & (d) & (e) \\ \midrule
         sim & 2320.5 & 2320.6 & 14676.8 & 635.5 & 1635.2 \\
         PIL & 2326.2 & 2321.1 & 14679.0 & 652.7 & 1638.1 \\ \bottomrule
    \end{tabular}
    \label{tab:pil_stage_costs}
\end{table}

\begin{figure*}[t!]
\centering
\begin{minipage}{0.6\textwidth}
\includegraphics[width=\textwidth]{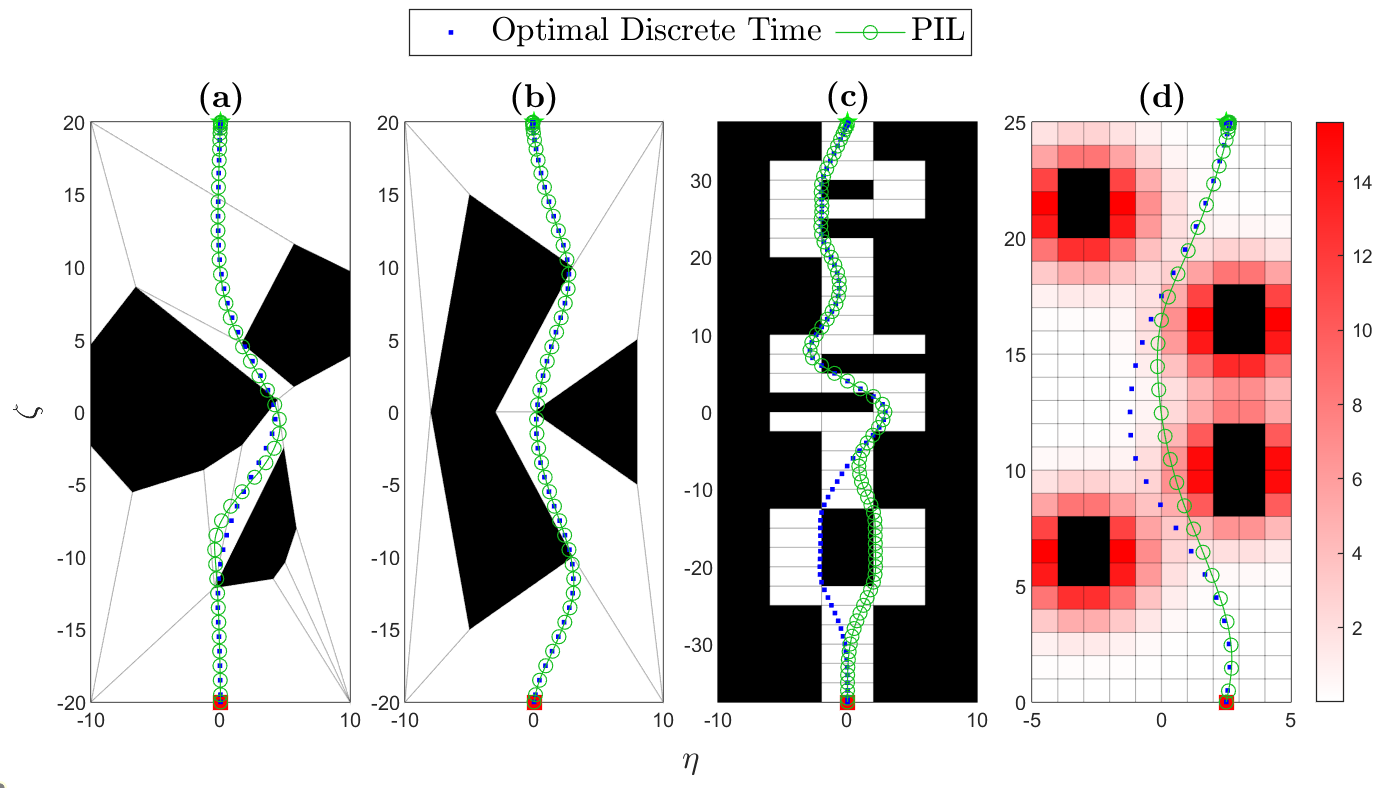}
\end{minipage}
\begin{minipage}{0.3\textwidth}
\includegraphics[width=\textwidth]{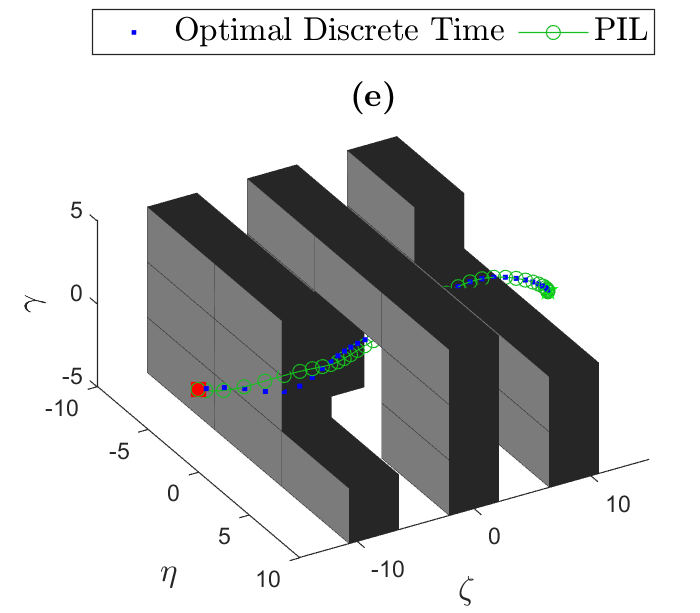}
\end{minipage}
\caption{Processor-in-the-loop results for double integrator model compared to optimal trajectories from Fig.~\ref{fig:trajectories}.}
\label{fig:traj-PIL}
\end{figure*}
\section{Conclusions}
We presented an MPC formulation and a corresponding MIQP solver applicable to AV motion planning problems. General polytopic partitions of the obstacle-free space and occupancy grid maps were both considered. Risk-aware planning was performed by incorporating occupancy probabilities from an occupancy grid map into the MPC cost function. A hybrid zonotope representation of the obstacle-free space was used, and its structure was exploited within the MIQP solver. Significant improvements in optimization time were observed using our solver when compared to a state-of-the-art commercial solver and a halfspace representation of the obstacle-free space. The solver was also shown to perform well when running in a real-time context on embedded hardware.

Future work will extend these techniques to dynamic environments where obstacle positions and occupancy probabilities vary with time step. Performance will be evaluated in dynamic environments using high-fidelity simulations and/or robotics experiments.

\bibliography{bibitems}
\bibliographystyle{ieeetr}

\end{document}